\newtheorem{theorem}{Theorem}%
\newaliascnt{lemma}{theorem}
\newtheorem{lemma}[lemma]{Lemma}%
\newtheorem{corollary}{Corollary}%
\theoremstyle{definition}
\newtheorem{definition}{Definition}
\newtheorem{remark}{Remark}
\newcommand{\bjl}[1]{}
\newcommand{\hf}[1]{}
\newcommand{\ps}[1]{}
\newcommand{\AutoAdjust}[3]{\mathchoice{ \left #1 #2  \right #3}{#1 #2 #3}{#1 #2 #3}{#1 #2 #3} }
\newcommand{\Xcomment}[1]{{}}
\newcommand{\InBrackets}[1]{\AutoAdjust{[}{#1}{]}}
\newcommand{\Ex}[2][]{\operatorname{\mathbf E}_{#1}\InBrackets{#2}}
\newcommand{\dd}{\mathrm{d}}  
\newcommand{\eps}{\epsilon}
\newcommand{\noaccents}[1]{#1}
\newcommand{\newagentvar}[3][\noaccents]{%
\expandafter\newcommand\expandafter{\csname #2\endcsname}{#1{#3}}%
\expandafter\newcommand\expandafter{\csname #2s\endcsname}{#1{\boldsymbol{#3}}}%
\expandafter\newcommand\expandafter{\csname #2smi\endcsname}[1][i]{#1{\boldsymbol{#3}}_{-##1}}%
\expandafter\newcommand\expandafter{\csname #2i\endcsname}[1][i]{#1{#3}_{##1}}%
\expandafter\newcommand\expandafter{\csname #2ith\endcsname}[1][i]{#1{#3}_{(##1)}}%
}
\newcommand{\blowfac}{\delta}
\newcommand{\blowprob}{\epsilon}
\newcommand{\dist}{F}
\newcommand{\dens}{f}
\newcommand{\monresquant}{\quant_*}
\newcommand{\monres}{\val_*}
\newcommand{\rev}{R}
\newcommand{\quantquestion}{\tilde \quant}
\newcommand{\sample}{s}
\newcommand{\shadefac}{\rho}
\newcommand{\shadeprob}{\zeta}
\newcommand{\revshade}{\beta}
\newcommand{\revshadedown}{\gamma}
\newcommand{\quantshadedown}{\quant_{\revshadedown}}
\newcommand{\quantshade}{\varphi}
\title{Randomization beats Second Price as a Prior-Independent Auction}
\author{Hu Fu%
\thanks{Microsoft Research New England, hufu@microsoft.com} \and %
 Nicole Immorlica%
\thanks{Microsoft Research New England, nicimm@gmail.com} \and %
  Brendan Lucier%
\thanks{Microsoft Research New England, brlucier@microsoft.com} \and
  Philipp Strack%
 \thanks{University of California Berkeley, pstrack@berkeley.edu}}
\begin{document}

\maketitle

\begin{abstract}
Designing revenue optimal auctions for selling an item to $n$ symmetric bidders is a fundamental problem in mechanism design.  Myerson (1981) shows that the second price auction with an appropriate reserve price is optimal when bidders' values are drawn i.i.d.\@ from a known regular distribution.  A cornerstone in the prior-independent revenue maximization literature is a result by Bulow and Klemperer (1996) showing that the second price auction without a reserve achieves $(n-1)/n$ of the optimal revenue in the worst case.

We construct a randomized mechanism that strictly outperforms the second price auction in this setting.  Our mechanism inflates the second highest bid with a probability that varies with $n$.  For two bidders we improve the performance guarantee from $0.5$ to $0.512$ of the optimal revenue.  We also resolve a question in the design of revenue optimal mechanisms that have access to a single sample from an unknown distribution.  We show that a randomized mechanism strictly outperforms all deterministic mechanisms in terms of worst case guarantee.


\end{abstract}

\section{Introduction}
\label{sec:intro}
Designing revenue optimal auctions to sell an item to symmetric bidders is one of the most fundamental problems in optimal mechanism design.
%
%
This problem was solved by \citet{Mye81}: %
if the buyers' valuations for the item are independently drawn from a distribution which is known to the seller, the revenue-maximizing mechanism is a second price auction (SPA) with a reserve price.
%
%
Notably, to determine the reserve price and implement the optimal auction the seller needs to know the valuation distribution.
Since setting the reserve price too high might leave the seller with no revenue, the design of auctions with good revenue guarantee when the prior information is unavailable, or very costly to learn, is of high economic relevance. 
%
%
%
%

%
%
A first answer to this question was given by \citet{bu1996}, who show that for regular%
\footnote{A distribution is regular if the virtual valuation is non-decreasing. This assumption, which is standard in the mechanism design literature, will be discussed in Section~\ref{SEC:PRELIM}. 
\hf{We might be making too strong a statement here.} \bjl{I softened the statement slightly --- is it better?}}
 distributions a second price auction with a reserve price of zero guarantees the seller at least a $1-1/n$ fraction of the optimal revenue, where $n$ is the number of bidders.  In particular, this yields at least half of the optimal revenue whenever there are at least two bidders in the auction.
%
%
Note that this auction is \emph{prior-independent}, in the sense that it does not depend on the prior distribution from which valuations are drawn.
%
%
Moreover, it is clear that this is the best possible way to set a deterministic reserve price:
any reserve greater than zero yields zero revenue when all bidders valuations fall below it, so no deterministic positive reserve can guarantee the seller a higher share of the optimal revenue.
In other words, the fixed reserve price that guarantees the seller the highest share of the optimal revenue, against all regular distributions, equals zero.
In fact, with the characterization developed by Myerson, it is not hard to show that \emph{no} symmetric \emph{deterministic} auction can improve upon Bulow and Klemperer's guarantee of half of the optimal revenue (for two bidders).
%
%

This paper gives the first mechanism that outperforms the second price auction in \citet{bu1996}'s setting, by way of making use of randomization.
%
%
This mechanism, which we call \textit{bid inflation}, works as follows:
with a fixed probability $1-\epsilon < 1$, the mechanism runs a SPA with a reserve price of zero; i.e., each bidder gets the object if and only if his valuation is above all other bidders' valuations.
With the remaining probability $\epsilon > 0$, the mechanism allocates the object to the bidder with the highest valuation if and only if his valuation is greater than $(1+\delta) > 1$ times the valuation of any other bidder, otherwise the object is unallocated.
%
%

The idea behind \textit{bid inflation} is based upon a refined analysis of the result by Bulow and Klemperer (1996).
%
%
Our analysis distinguishes two types of distributions: those for which the optimal reserve price is high, 
and those for which it is low.  
A reserve price is considered high (low) if the probability that the valuation exceeds it is low (high).
We first show that whenever the optimal reserve price is sufficiently low, the second price auction without reserve obtains strictly more than a $(1-1/n)$ fraction of the optimal revenue.
\hf{How about this for the second scenario:} \bjl{Love it!  Using the new text.} 
This suggests that one should design a mechanism which works well for distributions with high reserve prices.  An analysis \`a la Myerson suggests that one should try not to allocate to bids below the optimal reserve price and only allocate to the bids above it (even without knowing the optimal reserve).  Whereas inflating the second highest bid will correctly keep the item unsold when the highest bid is not high enough, this will also inevitably lose revenue when the highest bid is high but the second highest bid is not far below.  The key to our analysis is to show that for distributions with high reserves, the loss from the latter scenario does not offset the gain from the former one.  This in turn relies on a technical lemma which controls the \emph{quantile}, i.e., the relative standing in a distribution, as a valuation changes in a regular distribution.  With this analysis for the inflated second price auction in hand, one can then show via a continuity argument that a probabilistic mixture of the inflated auction and the second price auction without a reserve price guarantees a better approximation for all distributions.

%
%
While the main point of this paper is to demonstrate that
there exist randomized mechanisms that outperform the second-price auction in a worst case analysis over regular distributions, we also quantify the improvement in the two bidder case.
For two bidders, a second price auction with zero reserve price guarantees the seller half of the optimal revenue.
We show that the bid inflation mechanism, where the higher valuation bidder receives the object with probability $85\%$ if his valuation is below twice the other bidder's valuation, guarantees a revenue of at least $51.2\%$ of the optimal revenue.
Thus, for two players the bid inflation mechanism generates a $2.4\%$ higher revenue guarantee in the worst case scenario.


Finally, we also consider a variant of the prior-independent setting, in which the seller gains a small amount of information about the prior distribution in the form of a sample.  This sampling model has gained recent attention in the computer science literature.  In a spirit similar to the result of Bulow and Klemperer, it is known that running a second-price auction, with a reserve set equal to the sampled value, guarantees at least $1 - \frac 1 {n+1}$ fraction of the optimal revenue for arbitrary regular distributions; when there is only a single bidder, posting the sampled value as a take-it-or-leave-it price gives therefore half of the optimal revenue \citep{dhangwatnotai2014revenue}.  It is also known that no deterministic pricing method can improve upon this guarantee for a single bidder \citep{huang2014making}.  However, we show how to construct a \emph{randomized} mechanism which improves upon this revenue guarentee and obtains more than half of the optimal revenue for a single bidder.  The construction is of a flavour similar to the bid-inflation mechanism: it offers the sampled value as a take-it-or-leave-it price, but sometimes either inflates the value or shades it down before using it.

\citet{dhangwatnotai2014revenue} uses the single sample technique to construct a prior-independent auction for settings where bidders' values are independent draws from non-identical regular distributions but each distribution gives rise to at least two bidders' values.  Our improvement for the single bidder single sample scenario immediately implies a better prior-independent auction in this setting.

\subsection*{Related Literature}
As the \textit{bid inflation mechanism} is independent of the distribution of valuations it follows the doctrine propagated by \citet{Wilson1989} that economic mechanisms should not rely on precise details of the environment. 
%
%
Thereby, this paper falls in a recent literature in economics and computer science that aims at constructing mechanisms that work well under different distributions of valuations \citep[e.g.][]{bergemann2008pricing, bergemann2011should, dhangwatnotai2014revenue, RTY12, DHKN11,FHH13, FHHK14}.
The second part of our paper is also closely related to a recent literature that analyzes the optimal use of a single sample in this context \citep{dhangwatnotai2014revenue, huang2014making}.  \citet{huang2014making} showed that posting the sampled value as a take-it-or-leave-it price, which guarantees half of optimal revenue, is the optimal deterministic mechanism.  
We show in \autoref{SEC:SINGLESAMPLE} that randomly inflating and shading the sample value can strictly improve the guarantee.
In contrast to studying the use of a single sample, \citet{CR14} give the asymptotic number of samples needed for a given approximation factor, in the presence of asymmetric bidders.
Finally, this avenue of study is related to a literature on parametric auctions, in which a mechanism can depend on limited statistical information about the valuation distributions (such as the median, mean, and/or variance) \citep{AM12,ADMW13}.  The best-known parametric auctions for our setting are deterministic, but it is natural to ask whether randomization can help in this context.

\section{Preliminaries}
\label{SEC:PRELIM}
\paragraph{Single Item Auctions}

In a classical single item auction, the auctioneer has a single item to sell to
$n$~bidders.  Each bidder~$i$ has a private value~$\vali$ for receiving the
item.  We will consider the symmetric Bayesian setting, where each bidder's
value is drawn independently from a distribution~$\dist$ with density 
function~$\dens=\dist'$.  We write $\vals = (\vali[1], \dotsc, \vali[n])$ for the
profile of values.
%
%
An auction consists of \emph{allocation rules} $x_i: \mathbb R_+^n \to [0, 1]$ and 
\emph{payment rules} $p_i:  \mathbb R_+^n  \to \mathbb R$, meaning that, when the bid 
profile is $\vals$, each bidder~$i$ gets the item with probability
$x_i(\vals)$ and makes a payment of $p_i(\vals)$.  A bidder's utility is 
then $x_i(\vals) \vali - p_i(\vals)$.

The allocation rules have to satisfy the feasibility
constraint $\sum_i x_i(\vals) \leq 1$, for all $\vals$.  We also require an
auction to be \emph{individually rational}, that is, bidders always get
nonnegative returns.  Formally, for each bidder~$i$ and all bid profiles $\vals$,  $\vali x_i(\vals) - p_i(\vals) \geq 0$.  
An auction is \emph{Bayesian incentive compatible} if, for each bidder~$i$ and
value~$\vali$, the bidder's expected utility (over randomness in other bidders'
values)
is maximized by bidding her true valuation.  Formally, 
for each $i$, value $\vali$, and deviation $\vali'$, 
\begin{align*}
\Ex[\val_{-i}]{x_i(\vali, \vali[-i]) \vali - p_i (\vali, \vali[-i])} \geq 
\Ex[\val_{-i}]{x_i(\vali', \vali[-i]) \vali - p_i (\vali', \vali[-i])}.
\end{align*}

 
For each bidder~$i$, an allocation rule generates an \emph{interim allocation
rule} which maps her value~$\vali$ to a winning probability, in expectation over
the other bidders' bids.  We abuse notation and use $x_i(\vali) =
\Ex[\val_{-i}]{x_i(\vali, \vali[-i])}$ to denote the interim allocation rule.
The interim payment rule $p_i(\vali)$ is similarly defined.  Bayesian incentive
compability is therefore easily expressed by the interim allocation rules and
interim payment rules: for each bidder~$i$, each value~$\vali$ and possible
deviation~$\vali'$, we have $x_i(\vali) \vali - p_i(\vali) \geq x_i(\vali')
\vali - p_i(\vali')$.
\bjl{Should we just skip the prior definition of BIC, and use this one?}
\hf{I vote for keeping both, since this is a ``functional'' explanation of
interim allocations and payments, if the reader didn't get the ``structural''
explanation.}

The expected revenue of an auction is $\Ex[\vals]{\sum_i
p_i(\vals)}$.  Given a value distribution~$\dist$, a
\emph{revenue optimal} auction is one whose expected revenue is optimal among
all individually rational and Bayesian incentive compatible auctions.  We refer to the
revenue of this auction as the \emph{optimal revenue}. 

All auctions in this paper will satisfy the stronger \emph{(dominant strategy)
incentive compatibility} property, i.e., no matter what the the other bidders
bid, it is always in a bidder's best interest to truthfully bid her value.
Since we will only consider incentive compatible auctions (which is without loss
of generality by the revelation principle \citep{Mye81}), we have used the same
symbol~$\val$ for values and bids.  In general, an auction will be incentive
compatible if each bidder faces a take-it-or-leave-it price that does not depend
on her own bid.  All auctions considered in this paper will obviously satisfy
this condition.\footnote{We introduced the notion of Bayesian incentive
compatibility here because our benchmark optimal revenue mechanism needs only to
satisfy Bayesian incentive compatibility.  As Myerson showed, this is in fact
equal to the revenue of the optimal dominant strategy incentive compatible
auction.}

\paragraph{Revenue-Optimal Auctions}

In his seminal work, \citet{Mye81} laid the foundation for the study of 
revenue-optimal auctions.  The following theorem summarizes the part of his results that
will be used in this work.

\begin{theorem}[\citealp{Mye81}] In a single item auction where each bidder's
value is drawn i.i.d.\@ from a distribution~$\dist$, for any Bayesian incentive compatible auction with interim allocation rules $x_i$'s, 
\begin{enumerate}[(i)]
\item The expected revenue from each bidder is equal to the \emph{virtual
surplus}, defined as 
\begin{align*}
\Ex[\vali \sim \dist]{x_i (\vali) \cdot \left(\vali - \frac{1 - 
\dist(\vali)}{\dens(\vali)} \right)}.
\end{align*}
The term $\vali - \frac{1 - \dist(\vali)}{\dens(\vali}$ is called the
\emph{virtual value} of the value~$\vali$.  In other words, the virtual surplus
is the winning virtual value in expectation.

\item When the distribution~$\dist$ is \emph{regular}, i.e., when the virtual value is
monotone nondecreasing with the value, the optimal auction is the \emph{second
price auction with a reserve price~$\monres$}.  In this auction, the item is
only sold when at least one bid is above~$\monres$, and the winner pays the
higher of $\monres$ and the second highest bid.
\end{enumerate}
\end{theorem}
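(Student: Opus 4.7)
The plan is to follow the classical Myersonian approach. For part~(i), I would start from the standard characterization of BIC: each interim allocation $x_i(\cdot)$ must be monotone nondecreasing, and the interim payment is pinned down, up to an additive constant fixed by individual rationality, by the envelope identity
\begin{align*}
p_i(\vali) = \vali\, x_i(\vali) - \int_0^{\vali} x_i(t)\, dt.
\end{align*}
I would then take expectations with $\vali \sim \dist$ and swap the order of integration in the resulting double integral to rewrite the second term as $\int_0^{\infty} x_i(t)\bigl(1-\dist(t)\bigr)\,dt$. Substituting back produces $\Ex[\vali \sim \dist]{p_i(\vali)} = \Ex[\vali \sim \dist]{x_i(\vali)\bigl(\vali - (1-\dist(\vali))/\dens(\vali)\bigr)}$, which is exactly the virtual surplus formula.

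For part~(ii), my plan is to sum (i) over the $n$ symmetric bidders, so that the expected revenue of any BIC auction becomes the expected virtual welfare of its allocation rule. Subject to the feasibility constraint $\sum_i x_i(\vals) \le 1$, I would maximize this pointwise, allocating to the bidder with the highest virtual value whenever that value is nonnegative and not allocating otherwise. Under regularity the virtual value is monotone in $\vali$, so by symmetry this pointwise rule coincides with awarding the item to the highest bidder whose value exceeds the reserve $\monres$ defined by $\monres - (1-\dist(\monres))/\dens(\monres) = 0$. Since this allocation is monotone in each bidder's own bid, I would invoke the envelope payment formula once more to implement it truthfully; the formula forces the winner to pay the infimum of her winning bids, namely $\max\{\monres,\,\text{second-highest bid}\}$, which is the second price auction with reserve $\monres$.

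The hard part will be verifying that the pointwise virtual-welfare maximizer really is BIC-implementable, i.e., that the induced interim allocation is monotone in $\vali$. This is exactly where regularity enters: without it, the greedy pointwise rule may fail to be monotone and one would have to invoke Myerson's ironing procedure. I would therefore appeal to regularity precisely at this step, noting that once monotonicity holds ex post in $\vali$ with $\val_{-i}$ fixed, monotonicity of the interim rule follows immediately by taking expectation over $\val_{-i}$.
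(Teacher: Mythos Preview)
Your proof plan is the standard Myersonian argument and is correct as a sketch: the envelope/payment identity plus Fubini yields the virtual-surplus formula, and pointwise virtual-welfare maximization under regularity gives the SPA with reserve~$\monres$, with monotonicity (and hence BIC-implementability) following from the monotonicity of the virtual value.

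However, note that the paper does \emph{not} supply its own proof of this statement. The theorem is quoted as a preliminary from \citet{Mye81} and is used as a black box; there is nothing in the paper to compare your argument against. Your sketch is exactly the classical proof one would find in Myerson's paper or a standard textbook treatment, so in that sense it matches the intended (cited) approach.
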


\paragraph{\citet{BR89}'s Interpretation and Revenue Curves}

In a classic work, \citet{BR89} gave an interpretation of \citet{Mye81}'s
optimal auciton and drew a connection between the theory of optimal auctions and the
theory of monopolist price discrimination.  This connection reveals much
economic intuition underlying \citeauthor{Mye81}'s results, and provides
powerful technical tools.  It is this viewpoint and tools that we will heavily
use in this work, and we explain this connection in some detail here.

A bidder whose value is drawn from a distribution~$\dist$ can be seen as a
market where the customers' values are distributed according to~$\dist$, which
gives rise to its demand curve.  In particular, if the monopolizer sets the
price of a good to sell at~$p$, then only customers whose value are above~$p$
will buy the good.  The demand is therefore $1 - \dist(p)$.  Each value~$\val$
is in this way mapped to its \emph{quantile} $\quant(\val) := 1 -
\dist(\val)$, its relative standing in this market.  The revenue collected when the monopolizer sells a
 quantity~$\quant$ is given by $\rev(\quant) := \val(\quant) \quant$, where $\val(\quant) := \dist^{-1}(1 - \quant)$.  This function $\rev: [0, 1] \to \mathbb R_{+}$
is called the \emph{revenue curve}.  Back from this analogy to the bidder, the
quantile~$\quant$ of a value~$\val$ is the probability with which the buyer will
buy at a take-it-or-leave-it price of~$\val$.  Note that for any
distribution~$\dist$, $\quant$ is uniformly distributed on $[0, 1]$.

\citet{BR89} showed that the slope of the revenue curve at a quantile~$\quant$
is exactly equal to the virtual value of the value corresponding to~$\quant$.
Formally, 
\begin{align*}
\rev'(\quant(\val)) = \frac{\dd \rev(\quant)}{\dd \quant} \Big|_{\quant =
\quant(\val)} = \val - \frac{1 - \dist(\val)}{\dens(\val)}.
\end{align*}
As a consequence, regular distributions, in which virtual values are monotone
nondecreasing with values, are exactly the distributions with concave
revenue curves.  Also, for an auction with interim allocation rule~$x_i$, the
virtual surplus (equal to the expected revenue) from a bidder is given by $\int_0^1
\rev'(\quant) x_i(\quant(\val)) \: \dd \quant$.

The highest point of a revenue curve corresponds to the optimal revenue one
could get by setting a take-it-or-leave-it price for a bidder.  This price we
call the \emph{monopoly reserve price} and denote by~$\monres$.  We denote the
quantile of~$\monres$ by~$\monresquant$.


Throughout the paper we will assume $\rev(0) = \rev(1) = 0$.  This is a rather
standard assumption in the literature and is without loss of generality; for
completeness we justify it in \autoref{sec:prelim-app}.

\paragraph{Analysis of Second Price Auctions by \citet{bu1996}}  \citet{bu1996}
showed that, for bidders whose values are drawn from a regular distribution, the
revenue of an $n+1$ bidder second price auction without reserve price is always
(weakly) better than the revenue of the $n$ bidder revenue optimal revenue auction.  This
immediately implies that for $n$ bidders, the revenue of the second price
auction without a reserve price is at least $1-\frac 1 n$ of the optimal
revenue.

%
%
%
%

\section{Inflated Second Price Auction}
\label{SEC:SPAN}
We will consider a variant of the second price auction, which will sometimes inflate the bid of the second-highest bidder before offering that bid as a fixed price for the highest bidder.

\begin{definition}
The \emph{$\blowfac$-inflated second price auction} offers the item to the
highest bidder at a take-it-or-leave-it price set as $(1 + \blowfac)$ times the
second highest bid.  The \emph{$(\blowprob, \blowfac)$-inflated second price
auction} runs the $\blowfac$-inflated second price auction with probability~$\blowprob$, and runs the second price auction with probability $1 - \blowprob$.
\end{definition}

We will prove the following main theorem in this section.

\begin{theorem}
\label{thm:span}
For any $n$, there is a $(\blowprob, \blowfac)$-inflated second price auction,
such that for any $n$ bidders with values drawn i.i.d.\@ from a regular
distribution, the revenue of the inflated second price auction is strictly larger than $\tfrac
{n-1}n$ fraction of the optimal revenue.
\end{theorem}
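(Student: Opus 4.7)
The plan is to split regular distributions into two regimes according to the monopoly quantile $\monresquant$ (the quantile of the monopoly reserve $\monres$), and to show that in one regime the plain second price auction already beats $(n-1)/n$ by a positive margin, while in the other regime the $\blowfac$-inflated second price auction does so. A mixture with an appropriately chosen $\blowprob$ then strictly beats $(n-1)/n$ on every regular distribution.

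More concretely, I would fix a threshold $q_0 \in (0,1)$ and argue the following. \textbf{Case 1 (low reserve, $\monresquant \ge q_0$).} Re-examine Bulow--Klemperer through the Bulow--Roberts revenue-curve lens. The optimal revenue is $\rev(\monresquant)$ (times the probability that some bidder's quantile falls below $\monresquant$, integrated against order statistics), while the SPA revenue is $\int_0^1 \rev'(\quant)\,\sigma_n(\quant)\,\dd \quant$ where $\sigma_n$ is the interim allocation of the second-highest bidder. When $\monresquant$ is bounded away from $0$, the concavity of $\rev$ (regularity) and the fact that SPA allocates with the full mass of the top two order statistics combine to yield $\rev_{\text{SPA}}(\dist) \ge \bigl(\tfrac{n-1}{n}+\eta_1(q_0)\bigr)\cdot \rev(\monresquant)$ for some $\eta_1(q_0) > 0$ that depends only on $n$ and $q_0$, uniformly over regular $\dist$.

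\textbf{Case 2 (high reserve, $\monresquant < q_0$).} Here the reserve price $\monres$ is so far up in the support that SPA loses nontrivial revenue by selling to bidders below $\monres$. The $\blowfac$-inflated SPA mitigates this: it declines to sell whenever the top bid is at most $(1+\blowfac)$ times the second. I would decompose $\rev_\blowfac(\dist) - \tfrac{n-1}{n}\rev^*(\dist)$ into the \emph{gain} (revenue preserved by correctly withholding the item on profiles whose highest value lies below $\monres$) minus the \emph{loss} (revenue forfeited on profiles whose top two values both lie above $\monres$ but whose ratio is under $1+\blowfac$). The crux is the technical lemma mentioned in the introduction, controlling how quickly the quantile $\quant(\val)$ can move as $\val$ increases by a factor of $1+\blowfac$ in a regular distribution; this lemma bounds the mass of "bad" profiles (loss) in terms of the mass of "good" profiles (gain). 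When $\monresquant \le q_0$, the gain dominates, giving $\rev_\blowfac(\dist) \ge \bigl(\tfrac{n-1}{n}+\eta_2(\blowfac,q_0)\bigr) \rev^*(\dist)$ for $\blowfac$ chosen small enough that the loss term is controlled but positive enough to force meaningful abstention.

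\textbf{Combining via a mixture.} For any $\dist$, the $(\blowprob,\blowfac)$-inflated SPA has revenue $(1-\blowprob)\rev_{\text{SPA}}(\dist) + \blowprob\,\rev_\blowfac(\dist)$. On Case 2 distributions both summands are at least $\tfrac{n-1}{n}\rev^*(\dist)$ (Bulow--Klemperer gives the SPA term, Case 2 gives the inflated term), so the mixture inherits the strict $\eta_2$ gap with coefficient $\blowprob$. On Case 1 distributions, SPA exceeds $(n-1)/n$ by $\eta_1$ while $\rev_\blowfac(\dist)$ might fall short; but by a continuity/Lipschitz argument (the inflated allocation rule differs from SPA only on a set whose measure goes to $0$ with $\blowfac$, and the revenue loss is bounded by a constant $M$ times $\rev^*(\dist)$), the deficit of the inflated component is at most $M\cdot \rev^*(\dist)$ for some $M$ independent of $\dist$. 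Choosing $\blowprob$ so that $(1-\blowprob)\eta_1 > \blowprob M$ preserves a strict gap in Case 1. The main obstacle is Case 2 and, within it, the quantile-control lemma that makes the loss/gain comparison go through uniformly in $\dist$; once that is in hand, the case split and the mixture argument are routine.
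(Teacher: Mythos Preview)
Your proposal is correct and follows essentially the same route as the paper: split on the monopoly quantile~$\monresquant$, use the revenue-curve argument (the paper's Corollary~\ref{cor:spanloss}) to show SPA strictly beats $\tfrac{n-1}{n}$ whenever $\monresquant$ is bounded away from~$0$, use the quantile-control lemma (Lemma~\ref{lem:quant-bound}) to show the $\blowfac$-inflated SPA beats $\tfrac{n-1}{n}$ when $\monresquant$ is small, and then take a convex combination. The only notable differences are cosmetic: the paper fixes $\blowfac=1$ and establishes the small-$\monresquant$ case by evaluating the bound at $\monresquant=0$ and invoking continuity (rather than a direct gain/loss comparison over an interval), and in the mixing step it simply lower-bounds the inflated-SPA contribution by zero in Case~1 rather than bounding its deficit by some $M$.
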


At the end of the section, we give the improved approximation ratios of $0.512$
for the case $n = 2$.

We will begin our analysis by studying the relationship between the optimal
revenue and the revenue of the second price auction (SPA) without reserve.
The goal of this analysis is to refine the standard $\tfrac n
{n-1}$-approximation result and to present an approximation that depends on the
quantile of the optimal reserve price, $\monresquant$.

We begin by deriving a bound on the additive revenue loss suffered by
using SPA rather than the optimal auction.

\begin{lemma}
\label{lem:spanloss}
For $n$ bidders with values drawn i.i.d.\@ from a regular distribution, the
difference between the revenue of the optimal auction and that of the second
price auction is at most $\rev(\monresquant) (1 - \monresquant)^{n-1}$.
\end{lemma}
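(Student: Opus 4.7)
The plan is to work entirely in quantile space using the Bulow-Roberts revenue curve $\rev$ and express the loss as an integral of $\rev'$ weighted by a simple allocation function, then integrate by parts and use concavity of $\rev$ to bound the remaining integral.

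First I would write both auctions in terms of interim allocation rules in quantile space. By Myerson's theorem in the form $\Ex[q]{\rev'(q) \, x(q)}$, the expected revenue of any symmetric auction with per-bidder interim allocation $x(q)$ equals $n \int_0^1 \rev'(q) x(q)\, \dd q$. For the second price auction without reserve we have $x^{\text{SPA}}(q) = (1-q)^{n-1}$; for the optimal auction (SPA with reserve $\monres$, whose quantile is $\monresquant$) we have $x^{\text{OPT}}(q) = (1-q)^{n-1} \cdot \mathbf{1}[q \leq \monresquant]$. Therefore the additive loss from using SPA instead of the optimal auction equals
\begin{align*}
\mathrm{OPT} - \mathrm{SPA} \;=\; -\,n \int_{\monresquant}^{1} \rev'(q)(1-q)^{n-1}\, \dd q,
\end{align*}
which is nonnegative since $\rev'(q) \leq 0$ for $q \geq \monresquant$.

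Next I would integrate by parts, using $u = (1-q)^{n-1}$, $\dd v = \rev'(q)\,\dd q$, together with $\rev(1) = 0$. This yields
\begin{align*}
\mathrm{OPT} - \mathrm{SPA} \;=\; n\,\rev(\monresquant)(1-\monresquant)^{n-1} \;-\; n(n-1)\!\int_{\monresquant}^{1} \rev(q)(1-q)^{n-2}\, \dd q.
\end{align*}
So it suffices to show that $n \int_{\monresquant}^{1} \rev(q)(1-q)^{n-2}\,\dd q \geq \rev(\monresquant)(1-\monresquant)^{n-1}$, because then the second term dominates all but a $1/n$ share of the first.

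The key observation — and the main ``content'' of the proof — is that regularity makes $\rev$ concave with $\rev(\monresquant)$ as its maximum and $\rev(1)=0$, so on $[\monresquant,1]$ the graph of $\rev$ lies above the chord from $(\monresquant,\rev(\monresquant))$ to $(1,0)$; that is, $\rev(q) \geq \rev(\monresquant)\cdot\frac{1-q}{1-\monresquant}$. Plugging this into the integral gives
\begin{align*}
\int_{\monresquant}^{1} \rev(q)(1-q)^{n-2}\,\dd q \;\geq\; \frac{\rev(\monresquant)}{1-\monresquant}\int_{\monresquant}^{1}(1-q)^{n-1}\,\dd q \;=\; \frac{\rev(\monresquant)(1-\monresquant)^{n-1}}{n}.
\end{align*}
Substituting back into the integration-by-parts expression cancels the $n R(q_*)(1-q_*)^{n-1}$ term down to exactly $\rev(\monresquant)(1-\monresquant)^{n-1}$, completing the proof. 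I expect no real obstacle here — the only subtle step is recognizing that concavity of the revenue curve (i.e., regularity) is exactly what converts the integration-by-parts identity into the desired one-sided bound; everything else is mechanical.
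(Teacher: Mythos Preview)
Your proposal is correct and essentially identical to the paper's own proof: both express the loss as $-n\int_{\monresquant}^{1}\rev'(q)(1-q)^{n-1}\,\dd q$, integrate by parts using $\rev(1)=0$, and then invoke concavity via the chord bound $\rev(q)\geq \rev(\monresquant)\frac{1-q}{1-\monresquant}$ to close the estimate. The only cosmetic difference is that the paper computes the per-bidder contribution first and multiplies by~$n$ at the end, whereas you carry the factor of~$n$ throughout.
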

\ps{do we need to repeat the assumptions here, i.e. could we omit $F$ regular $v_i\sim F$}
\hf{I vote for keeping the assumptions here.}
\begin{proof}
The optimal auction is a second price auction with a reserve at $\monres$.  The
allocation rules for the optimal auction and the second price auction
therefore differ only when the highest value is below $\monres$, i.e., its
corresponding quantile is larger than $\monresquant$.  Such quantiles correspond
to negative virtual values, and make up the difference between the
optimal auction (where the contribution is zero) and the second price auction (where the
contribution is negative).  For a bidder bidding at quantile~$\quant$, 
the probability that it is the lowest quantile is $(1 - \quant)^{n -
1}$.  The total negative virtual surplus generated by one bidder, 
over quantiles above $\monresquant$, is therefore
\begin{align*}
& \int_{\monresquant}^1 \rev'(\quant) (1 - \quant)^{n-1} \: \dd \quant \\
= & \rev(\quant) (1 - \quant)^{n-1} \Big|_{\monresquant}^1 + (n-1)
\int_{\monresquant}^1 \rev(\quant) (1 - \quant)^{n-2} \: \dd \quant \\
= & -\rev(\monresquant)(1 - \monresquant)^{n-1} + (n-1)
\int_{\monresquant}^1 \rev(\quant) (1 - \quant)^{n-2} \: \dd \quant
\end{align*}

This is the negative of the difference between the optimal revenue and the
revenue of the second price auction.  To get an upper
bound of the difference, we need to find a lower bound for the integral.  Using the
concavity of $\rev(\quant)$, we know that, for $\quant \geq \monresquant$,
$\rev(\quant) \geq \rev(\monresquant) \cdot \frac{1- \quant}{1 - \monresquant}$.
Therefore, we know that the quantity above is at most
\begin{align*}
-\rev(\monresquant)(1 - \monresquant)^{n-1} + (n-1) \int_{\monresquant}^1
\rev(\monresquant) \frac{1 - \quant}{1 - \monresquant} \cdot (1 - \quant)^{n-2}
\: \dd \quant = - \frac 1 n \rev(\monresquant) (1 - \monresquant)^{n-1}.
\end{align*}

This is the revenue difference due to each bidder.  Multiplying this by $n$ gives us the
lemma.
\end{proof}

As a corollary, we obtain the following bound on the ratio between the revenue
of SPA and the optimal revenue.

\begin{corollary}
\label{cor:spanloss}
For $n$ i.i.d.\@ bidders with a regular valuation distribution, for any
$\monresquant \in (0, 1]$, the second price auction extracts at least a
$\frac{1 - (1 - \monresquant)^{n-1}}{1 - (1 - \monresquant)^n}$
fraction of the optimal revenue.  
\end{corollary}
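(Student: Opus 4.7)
The plan is to combine \autoref{lem:spanloss} with a straightforward lower bound on the optimal revenue. \autoref{lem:spanloss} already gives an \emph{additive} gap bound of $\rev(\monresquant)(1-\monresquant)^{n-1}$ between the optimal revenue and the revenue of the second price auction, so to convert this into a multiplicative guarantee I only need a matching lower bound on $\mathrm{OPT}$.

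For that lower bound I would use the Myerson-optimal auction itself: it is a second price auction with reserve $\monres$. The item is sold precisely when at least one bidder has value above $\monres$, an event of probability $1 - (1-\monresquant)^n$, and whenever a sale occurs the winner pays at least $\monres$ (since the winner pays the max of the reserve and the second-highest bid). Hence
\begin{align*}
\mathrm{OPT} \;\geq\; \monres\bigl(1 - (1-\monresquant)^n\bigr) \;=\; \frac{\rev(\monresquant)}{\monresquant}\bigl(1 - (1-\monresquant)^n\bigr),
\end{align*}
using that $\rev(\monresquant) = \monres \cdot \monresquant$ by definition of the revenue curve.

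Plugging this together with \autoref{lem:spanloss} yields
\begin{align*}
\frac{\mathrm{SPA}}{\mathrm{OPT}} \;\geq\; 1 - \frac{\rev(\monresquant)(1-\monresquant)^{n-1}}{\mathrm{OPT}} \;\geq\; 1 - \frac{\monresquant(1-\monresquant)^{n-1}}{1-(1-\monresquant)^n}.
\end{align*}
A one-line algebraic simplification, using $(1-\monresquant)^{n-1}\bigl[(1-\monresquant) + \monresquant\bigr] = (1-\monresquant)^{n-1}$, rewrites the right-hand side as $\frac{1-(1-\monresquant)^{n-1}}{1-(1-\monresquant)^n}$, matching the claim. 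There is no real obstacle here: the whole content has been absorbed into \autoref{lem:spanloss}, and the remaining work is just choosing a clean lower bound for $\mathrm{OPT}$ so that the numerator and denominator telescope correctly.
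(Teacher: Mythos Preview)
Your proof is correct and follows the same route as the paper: invoke \autoref{lem:spanloss} for the additive gap, then lower bound $\mathrm{OPT}$ by $\monres\bigl(1-(1-\monresquant)^n\bigr)$ and simplify. The paper phrases the lower bound as the revenue of sequentially posting $\monres$ to each bidder, obtaining $\rev(\monresquant)\sum_{k=0}^{n-1}(1-\monresquant)^k$, but by the geometric series this is identical to your expression, so the two arguments coincide.
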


\begin{proof}
We give a lower bound on the optimal revenue, which, combined with
Lemma~\ref{lem:spanloss}, gives a lower bound on the approximation ratio of the
SPA.  The optimal auction could post $\monres$ as a take-it-or-leave-it price to
each bidder in turn, and sell at that price to the first bidder who accepts.
This gives a revenue of $\rev(\monresquant)[1 + (1 - \monresquant) + (1 -
\monresquant)^2 + \cdots + (1 - \monresquant)^{n-1}]$.  The ratio of the revenue
of the second price auction to the optimal revenue is therefore at least
\begin{align*}
1 - \frac{(1 - \monresquant)^{n-1}}{1 + (1 - \monresquant) + \cdots + (1 -
\monresquant)^{n-1}} = \frac{1 + (1 - \monresquant) + \cdots + (1 -
\monresquant)^{n-2}} {1 + (1 - \monresquant) + \cdots + (1 -
\monresquant)^{n-1}} = \frac{1 - (1 - \monresquant)^{n-1}}{1 - (1 -
\monresquant)^n}.
\end{align*}
\end{proof}

Note that the factor in Corollary \ref{cor:spanloss} is a strictly increasing
function in $\monresquant$, and is equal to $\frac{n-1}{n}$ when $\monresquant =
0$.  For the case $n = 2$, we see that the approximation ratio of the SPA is at
least $\frac{1}{2 - \monresquant}$.

We will now analyze the
$\blowfac$-inflated second price auction, and show that its approximation ratio is
better when $\monresquant$ is small.  Before that, we first prove a technical
lemma that gives us bounds on quantiles in a regular distribution for values
that are apart by a multiplicative factor.


\begin{lemma}
\label{lem:quant-bound}
For any regular distribution:
\begin{enumerate}
\item $\quant\left(\frac{\monres}{1 + \blowfac}\right) \leq
(1 + \blowfac) \monresquant$, and
\item 
For any $\quantquestion \geq \monresquant$,
$\quant\left(\frac{\val(\quantquestion)}{1 + \blowfac}\right) \geq \frac{1 +
\blowfac}{1 + \blowfac \quantquestion} \quantquestion$.  \end{enumerate}
\end{lemma}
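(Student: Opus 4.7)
The plan is to translate both inequalities into statements about the revenue curve $\rev(\quant) = \val(\quant) \cdot \quant$, which is concave by regularity and satisfies $\rev(0) = \rev(1) = 0$. Throughout, I use the fact that $\val$ is nonincreasing in $\quant$, so that the quantile of a smaller value is larger.

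For part (1), let $\quantquestion = \quant(\monres/(1+\blowfac))$. Since $\monres/(1+\blowfac) < \monres$, we have $\quantquestion > \monresquant$, so $\quantquestion$ lies on the decreasing side of $\rev$. The key observation is that $\rev$ attains its maximum at $\monresquant$, so $\rev(\quantquestion) \leq \rev(\monresquant)$. Substituting $\rev(\quantquestion) = \val(\quantquestion) \cdot \quantquestion = \frac{\monres}{1+\blowfac} \cdot \quantquestion$ and $\rev(\monresquant) = \monres \cdot \monresquant$, the inequality $\frac{\monres \quantquestion}{1+\blowfac} \leq \monres \monresquant$ rearranges directly to the desired bound $\quantquestion \leq (1+\blowfac)\monresquant$. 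No heavy machinery is required here; the entire content is that $\monresquant$ maximizes $\rev$.

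For part (2), fix $\quantquestion \geq \monresquant$ and set $\quant' := \quant\bigl(\val(\quantquestion)/(1+\blowfac)\bigr)$. As before, $\quant' > \quantquestion$, so $\quant' \in (\quantquestion, 1]$. The plan is to exploit concavity of $\rev$ together with $\rev(1) = 0$ via the chord between $(\quantquestion, \rev(\quantquestion))$ and $(1, 0)$: for every $\quant \in [\quantquestion, 1]$,
\begin{align*}
\rev(\quant) \;\geq\; \rev(\quantquestion) \cdot \frac{1-\quant}{1-\quantquestion}.
\end{align*}
Evaluating at $\quant = \quant'$ and substituting $\rev(\quant') = \val(\quant')\cdot \quant' = \frac{\val(\quantquestion)}{1+\blowfac}\cdot \quant'$ and $\rev(\quantquestion) = \val(\quantquestion)\cdot\quantquestion$, the factor $\val(\quantquestion)$ cancels and we obtain
\begin{align*}
\frac{\quant'}{1+\blowfac} \;\geq\; \quantquestion \cdot \frac{1-\quant'}{1-\quantquestion}.
\end{align*}
Routine algebra — clearing denominators and collecting the $\quant'$ terms — produces $\quant'\bigl[1 + \blowfac\,\quantquestion\bigr] \geq (1+\blowfac)\,\quantquestion$, which is exactly the claim.

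The only subtlety is making sure the chord argument is applied on the correct side of $\rev$; this is why the hypothesis $\quantquestion \geq \monresquant$ matters, since it forces $\quant' \geq \quantquestion$ and places both points in the decreasing region where the chord bound to $(1,0)$ is the tight concavity consequence we want. Beyond this, neither part involves any real obstacle — each reduces to a one-line consequence of concavity followed by elementary rearrangement.
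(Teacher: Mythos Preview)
Your proof is correct and follows essentially the same route as the paper: part~(1) is the one-line consequence of $\rev(\monresquant)$ being maximal, and part~(2) is the same concavity/chord comparison between $(\quantquestion,\rev(\quantquestion))$ and $(1,0)$, followed by the identical rearrangement. One small inaccuracy in your closing remark: the inequality $\quant' \geq \quantquestion$ holds simply because $\val(\quantquestion)/(1+\blowfac) \leq \val(\quantquestion)$, not because of the hypothesis $\quantquestion \geq \monresquant$ --- in fact the chord argument goes through for any $\quantquestion$, so that hypothesis is not actually used in either proof.
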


\begin{proof}
The first statement is a direct consequence of the optimality of $\monres$ as a
reserve price for a single bidder: $ \monres
\monresquant = \rev(\monresquant) \geq \rev(\quant(\monres / (1 + \blowfac))) =
\quant(\monres / (1 + \blowfac)) \cdot \monres / (1 + \blowfac)$.

For the second statement, let us denote by $\quant_1$ the quantile of
$\val(\quantquestion) / (1 + \blowfac)$.  By regularity of the distribution, the revenue
curve is concave, and because $\quant_1$ is greater than $\quantquestion$ and
both are greater than $\monresquant$, on the revenue curve the point $(\quant_1,
\rev(\quant_1))$ is above the straight
line connecting $(1, \rev(1) = 0)$ and $(\quantquestion, \rev(\quantquestion))$.
Therefore,
\begin{align*}
\frac{\rev(\quantquestion)}{1 - \quantquestion} \leq \frac{\rev(\quant_1)}{1 -
\quant_1}, \quad \Rightarrow \quad
\frac{\val(\quantquestion)\quantquestion}{1 - \quantquestion} \leq
\frac{\quant_1 \val(\quantquestion)}{(1 + \blowfac)(1 - \quant_1)}.
\end{align*}
Rearranging the terms we have that $\quant_1 \geq \frac{1 + \blowfac}{1 +
\blowfac \quantquestion} \quantquestion$.  
\end{proof}

The next lemma lower bounds the approximation ratio of $\blowfac$-inflated SPA for distributions with a small $\monresquant$.

\begin{lemma}
\label{lem:delta-blown-n}
For a regular distribution with $\monresquant < 1/n$, the revenue of 
the $\blowfac$-inflated SPA for $n$ i.i.d.\@ bidders is at least 
\begin{align*}
n \rev(\monresquant) \left[ [1 - (1 + \blowfac) \monresquant]^{n-1} - \left(1 -
\frac{(1 + \blowfac) \monresquant}{1 + \blowfac \monresquant} \right)^{n-1} +
\frac{(n-1)(1 + \blowfac)}{1 - \monresquant} \int_{\monresquant}^1 \frac{(1 -
\quant)^{n-1}}{(1 + \blowfac \quant)^n} \: \dd \quant \right].
\end{align*}
\end{lemma}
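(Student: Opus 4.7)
The plan is to express the revenue of the $\blowfac$-inflated SPA through Myerson's virtual surplus formula. A bidder at quantile $\quant$ wins if and only if every other bidder's quantile exceeds $\quantquestion(\quant) := \quant(\val(\quant)/(1+\blowfac))$, the quantile of the inflated threshold, so the interim allocation is $x(\quant) = (1 - \quantquestion(\quant))^{n-1}$. Note that $\quantquestion$ is monotone nondecreasing in $\quant$, being the composition of two decreasing maps. The per-bidder expected revenue is $\int_0^1 \rev'(\quant)(1-\quantquestion(\quant))^{n-1}\,\dd\quant$, and the total is $n$ times this. I split this integral at $\monresquant$, using that by concavity of $\rev$, the slope $\rev'$ is nonnegative on $[0,\monresquant]$ and nonpositive on $[\monresquant,1]$.

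On $[0, \monresquant]$, where $\rev' \geq 0$, I need a pointwise lower bound on $(1-\quantquestion(\quant))^{n-1}$. Monotonicity of $\quantquestion$ together with \autoref{lem:quant-bound}(1) gives $\quantquestion(\quant) \leq \quantquestion(\monresquant) = \quant(\monres/(1+\blowfac)) \leq (1+\blowfac)\monresquant$, so $(1-\quantquestion(\quant))^{n-1} \geq (1-(1+\blowfac)\monresquant)^{n-1}$ throughout the interval. Combined with $\rev(0)=0$, this piece contributes at least $(1-(1+\blowfac)\monresquant)^{n-1}\,\rev(\monresquant)$, which will become the first summand inside the bracket.

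On $[\monresquant, 1]$, where $\rev' \leq 0$, I instead need an upper bound on $(1-\quantquestion(\quant))^{n-1}$. \autoref{lem:quant-bound}(2) applied at $\quantquestion = \quant$ gives $\quantquestion(\quant) \geq (1+\blowfac)\quant/(1+\blowfac\quant)$, and hence $1-\quantquestion(\quant) \leq (1-\quant)/(1+\blowfac\quant)$. The task reduces to lower bounding $\int_{\monresquant}^1 \rev'(\quant)(1-\quant)^{n-1}/(1+\blowfac\quant)^{n-1}\,\dd\quant$. A direct calculation shows that the derivative of $(1-\quant)^{n-1}/(1+\blowfac\quant)^{n-1}$ equals $-(n-1)(1+\blowfac)(1-\quant)^{n-2}/(1+\blowfac\quant)^n$, so integration by parts (using $\rev(1)=0$) rewrites the integral as $-\rev(\monresquant)\bigl((1-\monresquant)/(1+\blowfac\monresquant)\bigr)^{n-1} + (n-1)(1+\blowfac)\int_{\monresquant}^1 \rev(\quant)(1-\quant)^{n-2}/(1+\blowfac\quant)^n\,\dd\quant$. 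The algebraic identity $(1-\monresquant)/(1+\blowfac\monresquant) = 1 - (1+\blowfac)\monresquant/(1+\blowfac\monresquant)$ recasts the boundary term into the form appearing in the lemma; applying the concavity bound $\rev(\quant) \geq \rev(\monresquant)(1-\quant)/(1-\monresquant)$ for $\quant\in[\monresquant,1]$ (the chord from $(\monresquant,\rev(\monresquant))$ to $(1,0)$), exactly as in the proof of \autoref{lem:spanloss}, produces the integral term.

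Summing the contributions of the two pieces and multiplying by $n$ yields the claimed lower bound. The main care is tracking the direction of each inequality, since $\rev'$ switches sign at $\monresquant$: the two parts of \autoref{lem:quant-bound} are tailored precisely for this, one giving an upper bound on $\quantquestion$ where we want a lower bound on $x(\quant)$, the other giving the opposite. The integration-by-parts step is the key technical manipulation, converting the virtual-surplus integral into a revenue-curve-weighted integral on which the concavity bound can be applied.
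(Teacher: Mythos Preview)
Your proof is correct and follows essentially the same approach as the paper's own proof: split the virtual-surplus integral at $\monresquant$, use the two parts of \autoref{lem:quant-bound} to bound the interim allocation on each piece in the appropriate direction, integrate by parts on $[\monresquant,1]$, and then apply the concavity chord bound $\rev(\quant)\ge \rev(\monresquant)(1-\quant)/(1-\monresquant)$. The only cosmetic difference is that you phrase the $[0,\monresquant]$ step via monotonicity of $\quantquestion$, while the paper states directly that a bidder with value above $\monres$ wins whenever all others are below $\monres/(1+\blowfac)$; these are the same argument.
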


\begin{proof}
Let the interim allocation rule of the $\blowfac$-inflated second price auction
for a bidder~$i$ be $x_i$.  Recall that the revenue from a bidder~$i$ is given
by $\int_0^1 \rev'(\quant) x_i(\quant) \: \dd \quant$.  On $[0, \monresquant]$,
$\rev'(\quant)$ is positive, and we need to lower bound $x_i(\quant)$; on
$(\monresquant, 1]$, $\rev'(\quant)$ is negative, and we need to upper bound
$x_i(\quant)$.

We first lower bound $\int_0^{\monresquant} \rev'(\quant) x_i(\quant) \: \dd
\quant$.  For such quantiles, the corresponding value is larger than $\monres$.
Obviously, if any bidder has a
value~$\val > \monres$, as long as all other
bidders' values are below $\frac{\monres}{1 + \blowfac}$, the bidder
bidding~$\val$ will win.  By Lemma~\ref{lem:quant-bound}, the quantile of $\frac
{\monres}{1 + \blowfac}$ is at most $(1 + \blowfac) \monresquant$, and
therefore the probability that a bidder's value is below $\frac{\monres}{1 +
\blowfac}$ is at least $1 - (1 + \blowfac) \monresquant$.  Hence, the expected
revenue collected from each bidder for values larger than the monopoly
reserve is at least
\begin{align*}
\int_0^{\monresquant} \rev'(\quant) [1 - (1 + \blowfac) \monresquant]^{n-1} \:
\dd \quant = [1 - (1 + \blowfac)\monresquant]^{n-1} \rev(\monresquant).
\end{align*}

We then upper bound the negative of $\int_{\monresquant}^1 \rev'(\quant)
x_i(\quant) \: \dd \quant$.  For such quantiles, the corresponding value is smaller than~$\monres$.
Recall that we would like to upper bound $x_i(\quant)$.
By the definition of $\blowfac$-inflated
SPA, such a value wins the auction if and only if all other bidders bid below
$\val / (1 + \blowfac)$.  By Lemma~\ref{lem:quant-bound}, the quantile of $\val /
(1 + \blowfac)$ is at least $\frac{1 + \blowfac}{1 + \blowfac \quant} \quant$.
In other words, the probability that an independent draw has value less than
$\frac{\val}{1 + \blowfac}$ is at most $1 - \frac{1 + \blowfac}{1 + \blowfac
\quant} \quant = \frac{1 - \quant}{1 + \blowfac \quant}$.  Therefore, the
negative contribution to the virtual surplus by each bidder is lower bounded by
\begin{align*}
\int_{\monresquant}^1 \rev'(\quant) \left( \frac{1- \quant}{1 + \blowfac \quant}
\right)^{n-1} \: \dd \quant 
= & \rev(\quant) \left( \frac{1- \quant}{1 + \blowfac \quant}
\right)^{n-1}
\Big|_{\monresquant}^1 + (n-1) \int_{\monresquant}^1 \rev(\quant) \left( \frac{1
- \quant}{1 + \blowfac \quant} \right)^{n-2} \cdot \frac{1 + \blowfac}{(1 + \blowfac
\quant)^2} \: \dd \quant \\
= & - \rev(\monresquant) \left(\frac{1 - \monresquant}{1 + \blowfac
\monresquant} \right)^{n-1} + (n-1)(1 + \blowfac) \int_{\monresquant}^1
\rev(\quant) \frac {(1 - \quant)^{n-2}}{(1 + \blowfac \quant)^n} \: \dd \quant,
\end{align*}
where in the first step we did an integration by part, and in the second step we
used the fact $\rev(1) = 0$.  Since we aim to lower bound this quantity, we
use the fact that $\rev(\quant)$ to the right of~$\monresquant$ is pointwise lower bounded by
$\rev(\monresquant) \cdot \frac{1 - \quant}{1 - \monresquant}$ because of its
concavity.  Substituting this, we have that
\begin{align}
& \int_{\monresquant}^1 \rev(\quant) \frac {(1 - \quant)^{n-2}}{(1 + \blowfac \quant)^n} \: \dd \quant 
\geq \int_{\monresquant}^1 \rev(\monresquant) \cdot \frac{1 - \quant}{1 -
\monresquant} \cdot \frac{(1 - \quant)^{n-2}}{(1 + \blowfac \quant)^n} \: \dd
\quant = \frac{\rev(\monresquant)}{1 - \monresquant} \int_{\monresquant}^1 \frac{(1 - \quant)^{n-1}}{(1 +
\blowfac \quant)^n} \: \dd \quant.
\label{eq:thatintegral}
\end{align}

Combining everything together, the revenue of the $\blowfac$-inflated second price
auction is at least
\begin{align}
n \rev(\monresquant) \left[ [1 - (1 + \blowfac) \monresquant]^{n-1} - \left(1 -
\frac{(1 + \blowfac) \monresquant}{1 + \blowfac \monresquant} \right)^{n-1} +
\frac{(n-1)(1 + \blowfac)}{1 - \monresquant} \int_{\monresquant}^1 \frac{(1 -
\quant)^{n-1}}{(1 + \blowfac \quant)^n} \: \dd \quant \right].
\label{eq:blown-apx}
\end{align}
\end{proof}

\begin{lemma}
\label{lem:delta=1}
For $\blowfac = 1$ and $\monresquant < \tfrac 1 n$, the ratio of the
$1$-inflated second price auction revenue to the optimal revenue is at least
\begin{align*}
& [1 - (1 + \blowfac) \monresquant]^{n-1} - \left(1 -
\frac{(1 + \blowfac) \monresquant}{1 + \blowfac \monresquant} \right)^{n-1}
+ \frac {(n-1)(1 - \monresquant)^{n-1}}{n(1 + \monresquant)^n} \\
& + \frac {n-1}
{n(1-\monresquant)} \left[ \left(1 - \frac 1 {n^2} \right)^{-n} - \frac{1}{(1 - \monresquant^2)^n} \right] \left(1 - \frac 1 n \right)^{2n}. 
\end{align*}
\end{lemma}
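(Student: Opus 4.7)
My plan is to combine Lemma~\ref{lem:delta-blown-n} (applied with $\blowfac=1$) with the standard upper bound $n\rev(\monresquant)$ on the optimal revenue, and with a careful lower bound on the remaining integral. For the upper bound on the optimal revenue, I would invoke Myerson's characterization: for a regular distribution the expected positive virtual value of a single bidder equals $\rev(\monresquant)$, and the optimal revenue is at most the sum over bidders of their expected positive virtual values, which is $n\rev(\monresquant)$. Dividing the revenue lower bound from Lemma~\ref{lem:delta-blown-n} by $n\rev(\monresquant)$, the first two terms of the ratio match the first two terms of the claimed bound immediately; what remains is to show that $\frac{2(n-1)}{1-\monresquant}\int_{\monresquant}^1 \frac{(1-\quant)^{n-1}}{(1+\quant)^n}\,\dd\quant$ dominates the sum of the last two summands.

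The key observation is the identity $(1+\quant)^n=(1-\quant^2)^n/(1-\quant)^n$, which rewrites the integrand as $\frac{(1-\quant)^{2n-1}}{(1-\quant^2)^n}$. Since by assumption $\monresquant<1/n$, the point $1/n$ lies strictly between $\monresquant$ and $1$, and I would split the integral at $1/n$. On each of the subintervals $[\monresquant,1/n]$ and $[1/n,1]$, the factor $1/(1-\quant^2)^n$ is increasing in $\quant$ and hence lower bounded by its value at the left endpoint---namely $1/(1-\monresquant^2)^n$ and $1/(1-1/n^2)^n$ respectively---which I can pull out of the integral as a constant. The remaining factor $(1-\quant)^{2n-1}$ integrates explicitly to $(1-\quant)^{2n}/(2n)$ up to sign.

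Adding the two pieces and simplifying via $(1-\monresquant)^{2n}/(1-\monresquant^2)^n=((1-\monresquant)/(1+\monresquant))^n$ and $(1-1/n)^{2n}/(1-1/n^2)^n=((n-1)/(n+1))^n$ produces
\[
\int_{\monresquant}^1 \frac{(1-\quant)^{n-1}}{(1+\quant)^n}\,\dd\quant \;\geq\; \frac{1}{2n}\left[\left(\frac{1-\monresquant}{1+\monresquant}\right)^n + \left(\frac{n-1}{n+1}\right)^n - \frac{(1-1/n)^{2n}}{(1-\monresquant^2)^n}\right].
\]
Multiplying both sides by $\frac{2(n-1)}{1-\monresquant}$ and regrouping---absorbing one factor of $(1-\monresquant)$ into the first bracketed term to recover $\frac{(n-1)(1-\monresquant)^{n-1}}{n(1+\monresquant)^n}$, and factoring $(1-1/n)^{2n}$ out of the remaining two terms to recover the last summand involving $(1-1/n^2)^{-n}-(1-\monresquant^2)^{-n}$---delivers exactly the claimed lower bound.

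The main obstacle is spotting the rewrite $(1+\quant)^n=(1-\quant^2)^n/(1-\quant)^n$: it is elementary but not obviously motivated, and it is precisely what makes the integrand separate into a trivially integrable power $(1-\quant)^{2n-1}$ and a monotone factor $1/(1-\quant^2)^n$ amenable to a piecewise constant lower bound. After that, the choice of split point $1/n$ is essentially forced by the hypothesis $\monresquant<1/n$ and by the $1/n^2$ appearing in the target expression, and the remaining manipulations are routine.
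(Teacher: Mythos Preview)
Your proposal is correct and follows essentially the same route as the paper: invoke Lemma~\ref{lem:delta-blown-n} with $\blowfac=1$, divide by the upper bound $n\rev(\monresquant)$ on the optimal revenue, rewrite the integrand via $(1+\quant)^n=(1-\quant^2)^n/(1-\quant)^n$, split at $1/n$, and on each subinterval lower bound $1/(1-\quant^2)^n$ by its value at the left endpoint before integrating $(1-\quant)^{2n-1}$ explicitly. The only discrepancy is a typo in the paper's displayed proof (it writes $(1-\monresquant)^n$ where $(1-\monresquant^2)^n$ is intended), which your version handles correctly.
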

We relegate the proof to \autoref{sec:span-app}.

\begin{proof}[Proof of \autoref{thm:span}]
By Lemma~\ref{lem:delta=1}, for $\monresquant = 0$, the approximation ratio of the
$1$-inflated second price auction is at least $\frac{n-1}{n} \cdot [1 + (1 - \frac{2}{n+1})^n]$, which
is strictly greater than $\frac {n-1}{n}$ for any~$n$.  Since the bound given by
Lemma~\ref{lem:delta=1} is a continuous function, for a sufficiently small
$\underline \quant > 0$, for all $\monresquant < \underline \quant$, the
$1$-inflated SPA has an approximation ratio at least $\frac{n-1}{n} \cdot [1 +
\frac 1 2 (1 - \frac{2}{n+1})^n]$.  Recall that the approximation ratio we
derived in \autoref{cor:spanloss} is a strictly increasing function
in~$\monresquant$ that equals to $\frac {n-1}{n}$ at $\monresquant = 0$.
Therefore at $\underline \quant$, the approximation ratio of the SPA is
$\frac{n-1}{n} (1 + \gamma)$ for some $\gamma > 0$.  Taking $\blowprob > 0$ small
enough such that $(1 + \gamma) (1 - \blowprob) > 1 + \eta$ for some $\eta > 0$,
we will ensure that for all $\monresquant > \underline \quant$, the
approximation of the $(\blowprob, \blowfac)$-inflated SPA is at least $\frac
{n-1}{n} (1 + \eta)$ (this pessimistically does not assume any revenue coming
from the $1$-inflated SPA).  On the other hand, for $\monresquant \leq
\underline \quant$, since the approximation ratio of the SPA is always at least
$\frac {n-1}{n}$, the $(\blowprob, \blowfac)$-inflated SPA has an approximation
ratio at least $\frac{n-1}{n} (1 + \frac{\blowprob}{2} (1 - \frac 2 {n+1})^n)$.
This proves the theorem.
\end{proof}

\begin{remark}
\label{rem:spatwo}
Lemma~\ref{lem:delta=1} is in place because the integral
in~\eqref{eq:thatintegral} is not easy for general values of~$n$.  The argument
in the proof of \autoref{thm:span} is rather pessimistic.  
However, for concrete values of~$n$, one can compute the revenue lower
bound in Lemma~\ref{lem:delta-blown-n} for any $\blowfac > 0$, without going
through further losses in the analysis of Lemma~\ref{lem:delta=1} and \autoref{thm:span}, and get better approximation ratios.  For
example, for $n = 2$ and $\blowfac = 1$,
\begin{align*}
\int_{\monresquant}^1 \frac{(1 - \quant)^{n-1}}{(1 + \blowfac \quant)^n} \: \dd
\quant = \int_{\monresquant}^1 \frac{1-\quant}{(1 + \quant)^2} \: \dd \quant = 
\frac{2}{\monresquant+1}+\log \left(\frac{\monresquant+1}{2}\right)-1.
\end{align*}
One can substitute this into \eqref{eq:blown-apx} and combine it with
\autoref{cor:spanloss}; numerical computation shows that the $(0.15,
1)$-inflated second price auction gives at least $0.512$ fraction of the optimal
revenue.
\end{remark}


\section{Revenue Maximization with a Single Sample}
\label{SEC:SINGLESAMPLE}

In this section we show that, for any buyer with value drawn from a regular
distribution~$\dist$, with one sample from the same distribution, one can extract
strictly more than half of the the optimal revenue, by introducing
randomization in the use of the sample.  We denote the sample by $\sample$, and the buyer's value
by~$\val$.  Note that in this setting, the optimal revenue is simply
$\rev(\monresquant)$.

\begin{definition}
The post-the-sample algorithm posts the sample~$\sample$ as a take-it-or-leave-it
price.  The $\shadefac$-shaded post-the-sample algorithm posts $(1 -
\shadefac)\sample$ as a take-it-or-leave-it price.  The $\blowfac$-inflated
post-the-sample algorithm posts $(1 + \blowfac)\sample$ as the take-it-or-leave-it
price.  The $(\shadeprob, \shadefac, \blowprob, \blowfac)$-randomized
post-the-sample algorithm runs $\shadefac$-shaded post-the-sample with
probaiblity~$\shadeprob$, $\blowfac$-inflated post-the-sample with
probability~$\blowprob$, and (normal) post-the-sample with probability $1 -
\shadeprob - \blowprob$.
\end{definition}

\begin{theorem}
\label{thm:singlesample}
There exists $\shadeprob, \shadefac, \blowprob, \blowfac$ such that for any
regular distribution, with a single sample, the $(\shadeprob, \shadefac, \blowprob,
\blowfac)$-randomized post-the-sample algorithm extracts strictly more than half
of the optimal revenue.
\end{theorem}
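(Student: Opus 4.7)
The plan is to follow the structure of the proof of Theorem~\ref{thm:span}, adapted to the single-sample setting. First, observe that the baseline post-the-sample algorithm achieves expected revenue $\int_0^1 \rev(\quant)\,\dd\quant$, since $\quant(\sample)$ is uniform on $[0,1]$ when $\sample \sim \dist$. By concavity of $\rev$ with $\rev(0)=\rev(1)=0$, this is at least $\rev(\monresquant)/2$, with equality approached only as $\rev$ becomes triangular (a limit that is unreachable by a strictly continuous $\dist$, but can be approximated arbitrarily closely).

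Second, following the pattern of Lemma~\ref{lem:delta-blown-n}, I would derive lower bounds on the revenues of the $\blowfac$-inflated and $\shadefac$-shaded variants, which I denote $B_{+}(\blowfac)$ and $B_{-}(\shadefac)$. Each has the form $\int_0^1 \rev(u(\quant))\,\dd\quant$, where $u(\quant)$ is the quantile of the adjusted price. Splitting the integral at $\quant = \monresquant$ and applying Lemma~\ref{lem:quant-bound} (together with its analogue for shading, obtained via the substitution $\blowfac' = \shadefac/(1-\shadefac)$) to bound $u(\quant)$ in each piece yields explicit expressions in terms of $\monresquant$ and the parameters.

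The proof then proceeds by a case analysis on $\monresquant$. Choose thresholds $0 < \underline{\quant} < \overline{\quant} < 1$ and appropriate $\blowfac, \shadefac$. For $\monresquant \leq \overline{\quant}$, most samples fall below $\monres$, so inflating shifts the posted price toward $\monres$; an analysis akin to Lemma~\ref{lem:delta=1} gives $B_{+}(\blowfac) \geq \rev(\monresquant)/2 + \tau$ for some uniform $\tau > 0$. Symmetrically, $B_{-}(\shadefac) \geq \rev(\monresquant)/2 + \tau$ whenever $\monresquant \geq \underline{\quant}$. With $\underline{\quant} < \overline{\quant}$ so the regimes overlap, every $\monresquant \in (0,1)$ is covered by at least one helpful variant. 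Choosing $\blowprob, \shadeprob > 0$ small enough that the positive margin of the helping variant dominates the bounded loss of the hurting one (all revenues are at most $\rev(\monresquant)$), the mixture $(1-\blowprob-\shadeprob) \cdot \text{baseline} + \blowprob \cdot B_{+}(\blowfac) + \shadeprob \cdot B_{-}(\shadefac)$ strictly exceeds $\rev(\monresquant)/2$ uniformly over all regular distributions.

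The main obstacle is establishing the uniform margin $\tau > 0$ in the near-triangular regime, where the baseline is itself close to $\rev(\monresquant)/2$. For triangular $\rev$ with peak at $\monresquant$, a direct computation shows that the peak contribution to $B_{-}(\shadefac)$ is $\monresquant(1-\shadefac)/[1 - \shadefac(1-\monresquant)]$, exceeding the baseline peak contribution $\monresquant/2$ by $\Theta(\monresquant)$ for small $\shadefac$; since the tail loss is only $O(\shadefac)$, shading yields a positive net margin whenever $\monresquant$ is bounded below. A symmetric analysis controls inflation for $\monresquant$ bounded above. The overlap of the two regimes then ensures a uniform $\tau > 0$ over all regular $\dist$.
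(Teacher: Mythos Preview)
Your proposal has a genuine gap in the central step. You assert that ``$B_{-}(\shadefac) \geq \rev(\monresquant)/2 + \tau$ whenever $\monresquant \geq \underline{\quant}$'' and, symmetrically, that $B_{+}(\blowfac) \geq \rev(\monresquant)/2 + \tau$ whenever $\monresquant \leq \overline{\quant}$, with overlap $\underline{\quant} < \overline{\quant}$. The inflation claim is indeed supported by the paper's bound~\eqref{eq:blowsample}, but only for rather small~$\monresquant$ (the bound drops below $1/2$ already near $\monresquant \approx 0.08$). Your shading claim, however, is argued only for the \emph{triangular} revenue curve; you give no argument that it holds for every concave~$\rev$ with peak at a given~$\monresquant$. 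The paper's own lower bound on the shaded revenue (Lemma~\ref{lem:shade}) depends on an additional parameter $\revshade$ (the fraction of $\rev(\monresquant)$ obtained by posting $\monres/(1-\shadefac)$), and for $\revshade$ near~$1$ that bound is \emph{below} $\rev(\monresquant)/2$ for every $\monresquant>0$. So the tools at hand do not establish your claim, and you have not supplied a sharper analysis that would. Whether the claim is even true for all regular distributions is unclear; in any case, showing that the $\overline{\quant}$-regime and the $\underline{\quant}$-regime actually overlap would require substantially more work than a triangular calculation.

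The paper avoids this difficulty by a genuinely different decomposition. Rather than a one-dimensional case split on $\monresquant$, it introduces the second parameter $\revshade$ and argues in two dimensions: when \emph{both} $\monresquant$ and $\revshade$ are bounded below (i.e.\ the revenue curve is noticeably non-triangular near the peak), Lemma~\ref{lem:sample-revshade} shows the \emph{baseline} post-the-sample algorithm already extracts strictly more than $\rev(\monresquant)/2$, so a tiny admixture of shading and inflation cannot destroy this margin. In the complementary regime (either $\monresquant$ small or $\revshade$ small), a fixed $0.8/0.2$ combination of shading and inflation is shown to beat $1/2$ using Lemmas~\ref{lem:shade} and~\eqref{eq:blowsample}. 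Your sketch uses the baseline only as a trivial $\geq \rev(\monresquant)/2$ fallback and never exploits its strict margin; this is the key idea you are missing.
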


Using our randomized post-the-sample algorithm in place of the original
post-the-sample algorithm in the auction of \citet{dhangwatnotai2014revenue}, we
have the following corollary.

\begin{corollary}
\label{cor:dhangwatnotai}
In a single-item multi-bidder auction, where bidders' values are drawn
independently from regular distributions, and where for each bidder there is at
least another bidder whose value is drawn from the same distribution, there is a
prior-independent auction whose revenue is strictly better than $\tfrac 1 8$ of
the revenue of the optimal auction which knows all distributions.
\end{corollary}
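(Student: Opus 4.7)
The plan is to use the prior-independent auction of \citet{dhangwatnotai2014revenue} verbatim, with one change: wherever their construction invokes the deterministic post-the-sample pricing rule as a single-bidder single-sample subroutine, substitute the randomized post-the-sample algorithm from \autoref{thm:singlesample}. Their auction pairs up bidders drawn from a common distribution, reserves one bidder's reported value as a sample from that distribution, and then uses the sample, through a single-bidder single-sample pricing rule, to produce a reserve or take-it-or-leave-it price for the other bidder(s) from that distribution. Their analysis yields an overall approximation factor that depends linearly on the guarantee $\alpha$ of the single-sample subroutine; instantiating with $\alpha = 1/2$ gives the $1/8$ bound achievable by their original deterministic construction.

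First I would revisit Dhangwatnotai's analysis to isolate the step where the single-sample guarantee is invoked, verifying that $\alpha$ indeed enters the final approximation as a multiplicative constant that is not further degraded. This is the only nontrivial step, and amounts to carrying $\alpha$ as a symbolic parameter through their calculation rather than substituting the value $1/2$ prematurely.

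Second, I would appeal to \autoref{thm:singlesample} to obtain an $\alpha > 1/2$ achieved by the $(\shadeprob, \shadefac, \blowprob, \blowfac)$-randomized post-the-sample algorithm. Because the internal randomness of this subroutine is drawn independently of the bidders' values and of all other random draws in the outer mechanism, linearity of expectation lets Dhangwatnotai's outer analysis proceed term by term after substitution. Plugging the strictly improved $\alpha$ into their bound then immediately yields an overall approximation factor strictly greater than $1/8$, proving the corollary.

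The only anticipated obstacle is bookkeeping: making sure the multiplicative role of $\alpha$ is preserved throughout, so that any strict improvement in the single-sample guarantee translates strictly into the overall guarantee. No new technical content beyond \autoref{thm:singlesample} is required.
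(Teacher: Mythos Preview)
Your proposal is correct and matches the paper's approach exactly: the paper simply states that substituting the randomized post-the-sample algorithm from \autoref{thm:singlesample} into the construction of \citet{dhangwatnotai2014revenue} yields the corollary, without writing out any further argument. Your additional remarks about tracking $\alpha$ symbolically and verifying that the subroutine's randomness does not interfere with the outer analysis are sensible elaborations of precisely this substitution.
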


In \autoref{thm:singlesample}, the improvement over the one half approximation will be on the order of $10^{-8}$, and this is admittedly mainly
of theoretical interest (we also made no effort in optimizing the parameters).  The concrete values of $\shadeprob$ and $\blowprob$
are given in the proof near the end of the section.  We first analyze the
performance of the three ingredient mechanism in the randomized post-the-sample
algorithm, given in the next three lemmas.

\begin{lemma}
For any $\monresquant \in [0, \tfrac 1 2]$, the approximation ratio of the
$1$-inflated post-the-sample algorithm is at least 
\begin{align}
- 2\monresquant + \frac{2\monresquant}{1 + \monresquant} + \frac{2}{1 - \monresquant} \left[ \frac{2}{\monresquant+1}+\log \left(\frac{\monresquant+1}{2}\right)-1 \right].
\label{eq:blowsample}
\end{align}
\end{lemma}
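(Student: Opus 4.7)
The plan is to specialize the virtual-surplus analysis of \autoref{lem:delta-blown-n} to a single bidder, with the sample playing the role of the second-highest bid. By Myerson's lemma, the expected revenue of the $1$-inflated post-the-sample algorithm equals $\int_0^1 \rev'(\quant) \, x(\val(\quant)) \, \dd \quant$, where $x(\val) = \Pr_{\sample}[\val \geq 2\sample]$ is the interim allocation rule. Since the quantile of the sample is uniform on $[0,1]$, we have $x(\val(\quant)) = 1 - \quant(\val(\quant)/2)$.

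First I would split the integral at $\quant = \monresquant$, where $\rev'$ changes sign by regularity. On $[0, \monresquant]$, $\rev' \geq 0$, so I want a lower bound on $x$. Since $\val(\quant) \geq \monres$ on this range, monotonicity of $\quant(\cdot)$ gives $\quant(\val(\quant)/2) \leq \quant(\monres/2)$, and part (1) of \autoref{lem:quant-bound} (with $\blowfac = 1$) yields $\quant(\monres/2) \leq 2\monresquant$. Hence $x(\val(\quant)) \geq 1 - 2\monresquant$, and this region contributes at least $(1 - 2\monresquant)\rev(\monresquant)$.

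Next, on $(\monresquant, 1]$, $\rev' \leq 0$, so I want an upper bound on $x$. Part (2) of \autoref{lem:quant-bound} applied with $\quantquestion = \quant$ and $\blowfac = 1$ gives $\quant(\val(\quant)/2) \geq \tfrac{2\quant}{1+\quant}$, hence $x(\val(\quant)) \leq \tfrac{1-\quant}{1+\quant}$. An integration by parts, using $\rev(1) = 0$, converts $\int_{\monresquant}^1 \rev'(\quant) \tfrac{1-\quant}{1+\quant} \, \dd \quant$ into $-\rev(\monresquant) \tfrac{1-\monresquant}{1+\monresquant} + 2 \int_{\monresquant}^1 \tfrac{\rev(\quant)}{(1+\quant)^2} \, \dd \quant$. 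Applying the concavity bound $\rev(\quant) \geq \rev(\monresquant) \tfrac{1-\quant}{1-\monresquant}$ and the closed-form evaluation $\int_{\monresquant}^1 \tfrac{1-\quant}{(1+\quant)^2} \, \dd \quant = \tfrac{2}{\monresquant+1} + \log \tfrac{\monresquant+1}{2} - 1$ from \autoref{rem:spatwo} then gives the desired lower bound on this second integral in terms of $\rev(\monresquant)$.

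Summing the two regions, dividing by the single-bidder optimal revenue $\rev(\monresquant)$, and simplifying $1 - \tfrac{1-\monresquant}{1+\monresquant} = \tfrac{2\monresquant}{1+\monresquant}$ produces exactly \eqref{eq:blowsample}. The only subtlety is keeping inequality directions correct when $\rev'$ is negative, which is handled by using the upper (rather than lower) bound on $x$ on $(\monresquant, 1]$; I do not anticipate a significant obstacle, as the argument is essentially the single-bidder specialization of \autoref{lem:delta-blown-n} with $\blowfac = 1$. Indeed, one can alternatively observe that the revenue of this algorithm is exactly half the revenue of the $n = 2$, $\blowfac = 1$ inflated SPA on the same distribution (the sample serving as the ``other bidder''), so the bound also follows by halving \autoref{lem:delta-blown-n} at $n=2$ and evaluating the integral via \autoref{rem:spatwo}.
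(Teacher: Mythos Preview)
Your proposal is correct. The paper takes the shortcut you mention at the very end: it simply observes that the $1$-inflated post-the-sample revenue equals half the revenue of the two-bidder $1$-inflated SPA on the same distribution, then invokes \autoref{lem:delta-blown-n} (with $n=2$, $\blowfac=1$) and the closed-form integral from \autoref{rem:spatwo}. Your primary exposition unpacks this by redoing the virtual-surplus argument directly for a single bidder, which is exactly the $n=2$ specialization of the proof of \autoref{lem:delta-blown-n}; the two routes are equivalent.
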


\begin{proof}
The revenue from a single bidder by posting twice the sample is exactly half of
the revenue from a two bidder $1$-inflated second price auction where the two
bidders values are drawn i.i.d.\@ from the same regular distribution.  The lemma
then follows from Lemma~\ref{lem:delta-blown-n} and \autoref{rem:spatwo}.
\end{proof}

\begin{lemma}
\label{lem:shade}
For any regular distribution, if posting a price of $\frac{\monres}{1 -
\shadefac}$ obtains a revenue that is $\revshade$ fraction of the optimal
revenue $\rev(\monresquant)$, then the revenue of the $\shadefac$-shaded post-the-sample algorithm is at least
\begin{align}
\left[(1 - \shadefac)\left( \frac {\monresquant + 1} 2 - \revshade \monresquant \right) + \monresquant \revshade \shadefac \right] \rev(\monresquant).
\label{eq:shadesample}
\end{align}
\end{lemma}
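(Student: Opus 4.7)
The plan is to lower bound
\[
R_\shadefac \;=\; \Ex[\sample\sim \dist]{(1-\shadefac)\sample \cdot \quant((1-\shadefac)\sample)} \;=\; \int_0^1 \rev(q_1(q_s))\, \dd q_s,
\]
where $q_1(q_s) := \quant((1-\shadefac)\val(q_s))$ is the quantile of the posted price when the sample has quantile $q_s$. Using the identity $\rev(q_1) = (1-\shadefac)\val(q_s)\,q_1 = (1-\shadefac)\rev(q_s) + (1-\shadefac)\val(q_s)(q_1 - q_s)$, I split
\[
R_\shadefac \;=\; (1-\shadefac)\int_0^1 \rev(q_s)\, \dd q_s \;+\; (1-\shadefac)\int_0^1 \val(q_s)\bigl(q_1(q_s) - q_s\bigr)\, \dd q_s,
\]
and bound each integral separately.

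For the first integral, I plan to use the concavity of $\rev$ against four anchor points on the revenue curve: $(0,0)$, $(\revshade(1-\shadefac)\monresquant, \revshade\rev(\monresquant))$, $(\monresquant, \rev(\monresquant))$, and $(1, 0)$. The second anchor comes directly from the hypothesis, since posting $\monres/(1-\shadefac)$ at revenue $\revshade\rev(\monresquant)$ translates to $\quant(\monres/(1-\shadefac)) = \revshade(1-\shadefac)\monresquant$ and hence $\rev(\revshade(1-\shadefac)\monresquant) = \revshade\rev(\monresquant)$. Integrating the piecewise-linear lower envelope through these four anchors gives $\int_0^1 \rev(q)\,\dd q \geq \tfrac{1}{2}(1 + \monresquant\revshade\shadefac)\rev(\monresquant)$.

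For the second integral, I will split the range of $q_s$ at $\monresquant$. On $[\monresquant, 1]$, Lemma~\ref{lem:quant-bound} part~2 applied with $\blowfac = \shadefac/(1-\shadefac)$ yields $q_1(q_s) - q_s \geq \shadefac q_s(1-q_s)/(1-\shadefac(1-q_s))$, and combined with the chord bound $\rev(q_s) \geq (1-q_s)\rev(\monresquant)/(1-\monresquant)$ from concavity on $[\monresquant,1]$ this gives an explicit pointwise lower bound on the integrand. On $[0, \monresquant]$, the bound $\val(q_s) \geq \monres$ together with $q_1 \geq q_s$ is available, and in the sub-interval $(\revshade(1-\shadefac)\monresquant, \monresquant]$—where the sample itself is above $\monres$ but the shaded posted price has fallen below $\monres$—I additionally invoke $q_1 \geq \monresquant$ to strengthen the bound.

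Summing the contributions, the distribution-independent chord terms should aggregate into the $(1-\shadefac)(\monresquant+1)/2 \cdot \rev(\monresquant)$ piece of the claim, while the $\revshade$-dependent contributions from the second anchor and from the sub-interval $(\revshade(1-\shadefac)\monresquant,\monresquant]$ should combine into the correction $-(1-\shadefac)\revshade\monresquant + \revshade\monresquant\shadefac = \revshade\monresquant(2\shadefac-1)$. The main obstacle will be the precise bookkeeping verifying this cancellation: intuitively, the sub-interval $(\revshade(1-\shadefac)\monresquant,\monresquant]$ is exactly the region where the algorithm extracts the extra revenue encoded by the $\revshade\monresquant\shadefac$ term, and the algebraic identities must line up so that this extra revenue exactly offsets the looseness from using piecewise-linear chords elsewhere.
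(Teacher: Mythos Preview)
Your decomposition $R_\shadefac = (1-\shadefac)\int_0^1 \rev(q_s)\,\dd q_s + (1-\shadefac)\int_0^1 \val(q_s)(q_1(q_s)-q_s)\,\dd q_s$ is valid, and your four-anchor lower bound for the first integral is correct (it is in fact exactly the content of Lemma~\ref{lem:sample-revshade}).  But the ``bookkeeping'' you flag as the main obstacle is not merely bookkeeping: with the bounds you propose, the pieces do \emph{not} aggregate to the target for all~$\shadefac$.  Carrying your bounds through, one finds (ignoring the Region~$[\monresquant,1]$ contribution) that your lower bound minus the target equals $\tfrac{\monresquant\revshade}{2}\bigl[(1-\shadefac)^3\revshade - 3(1-\shadefac)^2 + 5(1-\shadefac) - 2\bigr]\rev(\monresquant)$.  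The bracket is nonnegative for all $\revshade\in[0,1]$ only when $\shadefac \lesssim 1/3$.  For larger~$\shadefac$ the Region~$[\monresquant,1]$ term, which is $O\bigl((1-\monresquant)^2\bigr)$, cannot rescue you when $\monresquant$ is close to~$1$: e.g.\ at $\shadefac = 0.5$, $\monresquant = 0.99$, $\revshade = 0.1$ (a realizable configuration), your aggregate bound falls short of the target by roughly $0.01\cdot\rev(\monresquant)$.  The reason is that on the sub-interval $(\revshade(1-\shadefac)\monresquant,\monresquant]$ you only use $q_1 \geq \monresquant$ and $\val(q_s)\geq\monres$, which loses too much information about how far the shaded price actually overshoots~$\monresquant$.

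The paper's proof takes a structurally different route that avoids this difficulty.  It writes $R_\shadefac = -\int_0^1 \rev'(q)\,\varphi(q)\,\dd q$, where $\varphi(q)$ is the quantile of $\val(q)/(1-\shadefac)$ (the inverse of your $q_1$), and bounds $\varphi$ on each side of $\monresquant$.  The key device you are missing is the introduction of an auxiliary parameter $\gamma$ --- the revenue fraction at the \emph{shaded-down} price $\monres(1-\shadefac)$ --- followed by an optimization showing the worst case occurs at $\gamma = 1-\shadefac$.  This extra degree of freedom is what lets the argument close for all~$\shadefac$.  Your approach does recover the lemma in the regime $\shadefac$ small (and hence suffices for the paper's application with $\shadefac = 0.01$), but as a proof of the lemma as stated it has a genuine gap.
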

This is the most technically involved proof of this section.  We relegate it to
\autoref{sec:singlesample-app}.

\begin{lemma}
\label{lem:sample-revshade}
If for a regular distribution, posting a price of $\frac{\monres}{1 -
\shadefac}$ obtains a revenue that is $\revshade$ fraction of the optimal
revenue $\rev(\monresquant)$, then the revenue of the post-the-sample algorithm
is at least $\frac 1 2 (1 + \monresquant \shadefac \revshade) \rev(\monresquant)$.
\end{lemma}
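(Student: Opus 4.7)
The plan is to express the revenue of the post-the-sample algorithm in quantile space and then exploit the hypothesis via a concavity-based lower bound on the revenue curve.

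First, I would rewrite the expected revenue. Since the sample $\sample$ and the buyer's value $\val$ are i.i.d.\@ draws from $\dist$, if we let $\quant_\sample$ denote the quantile of $\sample$, then $\Prx[\val \sim \dist]{\val \geq \sample \given \sample} = \quant_\sample$. Hence the revenue of post-the-sample equals
\begin{align*}
\Ex[\sample]{\sample \cdot \quant_\sample} = \int_0^1 \val(\quant) \cdot \quant \: \dd \quant = \int_0^1 \rev(\quant) \: \dd \quant.
\end{align*}

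Next I would translate the hypothesis into a constraint on the revenue curve. Let $\quant_1 := \quant\!\left(\tfrac{\monres}{1-\shadefac}\right)$. Since posting price $\tfrac{\monres}{1-\shadefac}$ extracts revenue $\revshade \rev(\monresquant)$, we have $\rev(\quant_1) = \revshade \rev(\monresquant)$, and expanding $\rev(\quant_1) = \quant_1 \cdot \tfrac{\monres}{1-\shadefac}$ together with $\rev(\monresquant) = \monresquant \monres$ gives
\begin{align*}
\quant_1 = (1 - \shadefac)\revshade \monresquant.
\end{align*}
Note $\quant_1 \leq \monresquant$ since $\revshade \leq 1$ and $\shadefac \geq 0$.

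Now I would invoke regularity: $\rev$ is concave on $[0,1]$ with $\rev(0) = \rev(1) = 0$. Concavity implies that the graph of $\rev$ lies above the piecewise linear chord through the four points $(0,0)$, $(\quant_1, \revshade \rev(\monresquant))$, $(\monresquant, \rev(\monresquant))$, $(1, 0)$. Hence $\int_0^1 \rev(\quant) \: \dd \quant$ is at least the area under this polygonal lower bound, which splits into a triangle on $[0, \quant_1]$, a trapezoid on $[\quant_1, \monresquant]$, and a triangle on $[\monresquant, 1]$. Adding these three areas yields
\begin{align*}
\tfrac{1}{2}\rev(\monresquant)\bigl[\quant_1 \revshade + (\monresquant - \quant_1)(1 + \revshade) + (1 - \monresquant)\bigr] = \tfrac{1}{2}\rev(\monresquant)\bigl[1 + \monresquant \revshade - \quant_1\bigr],
\end{align*}
and substituting $\quant_1 = (1-\shadefac)\revshade \monresquant$ collapses the bracketed expression to $1 + \monresquant \shadefac \revshade$, proving the lemma.

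There is no real obstacle here beyond being careful with the quantile identification $\quant_1 = (1-\shadefac)\revshade \monresquant$; the rest is a concavity argument plus an area computation, and it mirrors the standard chord argument that recovers the $1/2$-approximation when $\revshade = 0$ or $\shadefac = 0$.
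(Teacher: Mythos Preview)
Your proof is correct and is essentially the same as the paper's: both identify the post-the-sample revenue with $\int_0^1 \rev(\quant)\,\dd\quant$, locate the point $(\quant_1,\revshade\rev(\monresquant))$ with $\quant_1=(1-\shadefac)\revshade\monresquant$, and use concavity of $\rev$ to lower bound the area by the polygon through $(0,0)$, $(\quant_1,\revshade\rev(\monresquant))$, $(\monresquant,\rev(\monresquant))$, $(1,0)$. The paper phrases this as ``triangle area $\tfrac12\rev(\monresquant)$ plus an extra triangle of area $\tfrac12\monresquant\shadefac\revshade\rev(\monresquant)$,'' whereas you compute the trapezoidal decomposition directly, but the arguments are identical.
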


\begin{proof}
The revenue of the post-the-sample algorithm is equal to the total area under
the revenue curve \citep{dhangwatnotai2014revenue}.  We will therefore give a lower bound on this
area.  In general, subject to concavity and fixing~$\rev(\monresquant)$, the
area under the revenue curve is minimized at $\tfrac 1 2 \rev(\monresquant)$
by the triangle.  Given $\revshade$, the fraction of~$\rev(\monresquant)$
extracted by posting a price $\monres/(1 - \shadefac)$, the quantile of
$\monres / (1 - \shadefac)$ is given by $\revshade \rev(\monresquant) (1 -
\shadefac)$.  At this quantile, the triangle revenue curve would have a revenue
of $\rev(\monresquant) \revshade (1 - \shadefac)$, but the current revenue is
$\rev(\monresquant) \revshade$.  The two differ by $\shadefac \revshade
\rev(\monresquant)$.  The extra area over the triangle is therefore at least
$\tfrac 1 2 \monresquant \shadefac \revshade \rev(\monresquant)$.
\end{proof}

\begin{proof}[Proof of \autoref{thm:singlesample}]
We will show that setting $\blowfac = 1$, $\shadefac = 0.01$, and sufficiently
small $\blowprob$ and~$\shadeprob$, with $\blowprob = 4\shadeprob$, will
guarantee a revenue better than $\tfrac 1 2$ of the optimal revenue.  Note that
\eqref{eq:shadesample} is a decreasing function in~$\revshade$ for $\shadefac <
0.5$.  For $\monresquant \leq 0.02$ and any value of~$\revshade$, $0.8
\times$\eqref{eq:shadesample}$+ 0.2 \times$\eqref{eq:blowsample} is at least
$0.505 \rev(\monresquant)$, by taking the worst value of $\revshade = 1$.  For $\revshade \leq 0.05$ and any value
of~$\monresquant$, $0.8 \times$\eqref{eq:shadesample}$+ 0.2
\times$\eqref{eq:blowsample} is at least $0.518 \rev(\monresquant)$, by taking
the worst value of $\revshade = 0.05$, and the minimum is taken at
$\monresquant = 0$.  For the only remaining case, i.e., $\monresquant > 0.02$
\emph{and} $\revshade > 0.05$,  Lemma~\ref{lem:sample-revshade} gives that the
revenue of the post-the-sample algorithm is at least $0.500005
\rev(\monresquant)$.  Therefore, running post-the-sample with probability $1 -
10^{-6}$ guarantees a revenue of better than $0.5 + 10^{-6}$ fraction of
$\rev(\monresquant)$ in this case.  With the remaining probability of~$10^{-6}$,
running the $\blowfac$-inflated post-the-sample with probability $2 \times
10^{-7}$ and the $\shadefac$-shaded post-the-sample with probability $8 \times
10^{-7}$ guarantees a revenue of $0.5 + 5 \times 10^{-9}$ fraction of the
optimal revenue in the cases analyzed above.  Overall, the $(0.8 \times 10^{-6},
0.01, 0.2 \times 10^{-6}, 1)$-randomized post-the-sample algorithm gives an
approximation ratio of at least $0.5 + 5 \times 10^{-9}$.

\end{proof}

\bibliographystyle{apalike}
\bibliography{bibs}

\appendix

\section{Missing Argument from \autoref{SEC:PRELIM}}
\label{sec:prelim-app}
Throughout the paper we have made the assumption $\rev(0) = \rev(1) = 0$.  This is standard in the literature, and we briefly show that this is without loss of generality.

A value distribution whose support includes $0$ satisfies $\rev(1) = 0$; otherwise, one can mix into the distribution with probability~$\eps$ a uniform distribution on $[0, \underline{\val}]$, where $\underline{\val}$ is the infimum of the support.  As $\eps$ approaches $0$, this mixture converges to the original distribution, and for sufficiently small~$\eps$, the resulting revenue curve remains concave if the original revenue curve is.  As for the assumption ~$\rev(0) = 0$, we first justify it for bounded support distributions.  When there is no point mass on the supremum of the support, its revenue will be~$0$ at quantile~$0$.  If this is not the case, we can mix into the distribution with probability~$\eps$ a uniform distribution on $[\bar \val, \bar \val + \delta]$, where $\bar \val$ is the supremum of the support, and $\delta > 0$ is a small positive real number.  For sufficiently small $\delta$ and $\eps$, the resulting revenue curve will be concave if the original one is; as $\eps$ approaches~$0$, the mixture also converges to the original distribution.  Any unbounded distribution's revenue can be approached arbitrarily well by taking its truncation at higher and higer values.  The truncated distribution is bounded.  All our analysis would not be affected by all such asymptotic approximations, and we will simply assume $\rev(0) = \rev(1) = 0$.

\section{Missing Proofs from \autoref{SEC:SPAN}}
\label{sec:span-app}
\begin{proof}[Proof of Lemma~\ref{lem:delta=1}]
We estimate the integral in \eqref{eq:thatintegral}.  For $\blowfac = 1$,
\begin{align*}
\int_{\monresquant}^1 \frac{(1 - \quant)^{n-1}}{(1 + \blowfac \quant)^n} \: \dd \quant 
= & \int_{\monresquant}^1 \frac{(1 - \quant)^{n-1}}{(1 + \quant)^n} \: \dd \quant = \int_{\monresquant}^1 \frac{(1 - \quant)^{2n-1}}{(1 - \quant^2)^n} \: \dd \quant \\
= & \int_{\monresquant}^{\frac 1 n} \frac{(1 - \quant)^{2n-1}}{(1 - \quant^2)^n}
\: \dd \quant + \int_{\frac 1 n}^1 \frac{(1 - \quant)^{2n-1}}{(1 - \quant^2)^n}
\: \dd \quant \\
\geq & \frac{1}{(1 - \monresquant)^n} \int_{\monresquant}^{\frac 1 n}  (1 - \quant)^{2n-1} \: \dd \quant +
\int_{\frac 1 n}^1 \frac{(1 - \quant)^{2n-1}}{\left(1 - \frac 1 {n^2}
\right)^{n}} \: \dd \quant  \\
= & \frac {(1 - \monresquant)^{n}}{2n(1 + \monresquant)^n} + \frac 1 {2n} \left[
\left(1 - \frac 1 {n^2} \right)^{-n} - \frac{1}{(1 - \monresquant^2)^n} \right] \left(1 - \frac 1 n \right)^{2n}. 
 \end{align*}
Substituting this into \eqref{eq:blown-apx}, and noting that the optimal
revenue is at most~$n \rev(\monresquant)$, we obtain the lemma.
\end{proof}

\section{Missing Proofs from \autoref{SEC:SINGLESAMPLE}}
\label{sec:singlesample-app}
\begin{proof}[Proof of Lemma~\ref{lem:shade}]
We first introduce
some notations.  Let $\quantshade: [0, 1] \to [0, 1]$ be a function that maps a
quantile~$\quantquestion$ to the quantile whose value is $1 / (1 - \shadefac)$ times the
value corresponding to~$\quantquestion$.  (Formally,
$\quantshade(\quantquestion) = \quant(\dist^{-1}(1 - \quantquestion) / (1 - \shadefac))$.\footnote{If $\val / (1 -
\shadefac)$ is beyond the support of the distribution, just let $\quantshade$
be~$0$.}) 

We consider the difference of $\shadefac$-shaded post-the-sample
algorithm as compared with the plain post-the-sample algorithm. 
The two algorithms differ only when $\sample$ falls in the interval $[\val,
\frac{\val}{1 - \shadefac}]$: the post-the-sample algorithm does not serve,
whereas the $\shadefac$-shaded post-the-sample algorithm does.  Conditioning
on~$\val$, this event happens with probability $\quant(\val) -
\quantshade(\quant(\val))$.  The revenue of the post-the-sample algorithm is
$\int_0^1 \rev'(\quant) (1 - \quant) \: \dd \quant$, and therefore the revenue
of the $\shadefac$-shaded post-the-sample algorithm is
\begin{align*}
\int_0^1 \rev'(\quant) (1 - \quant) \: \dd \quant + \int_0^1 \rev'(\quant)
(\quant - \quantshade(\quant) \: \dd \quant = \int_0^1 \rev'(\quant) \: \dd
\quant - \int_0^1 \rev'(\quant) \quantshade(\quant) \: \dd \quant = - \int_0^1
\rev'(\quant) \quantshade(\quant) \: \dd \quant.
\end{align*}
We need to lower bound this term.

When the buyer's value~$\val$ is smaller than $\monres$, its virtual value is
negative.  Here we would like to lower bound $\quantshade(\quant)$.  Let $\revshadedown$ be the ratio of the revenue of posting $\monres (1 - \shadefac)$ to the optimal revenue $\rev(\monresquant)$.  Let $\quantshadedown$ be the quantile of the value $\monres(1 - \shadefac)$, then because $\monres(1 - \shadefac) \quantshadedown = \revshadedown \monres \monresquant$, we have $\quantshadedown = \revshadedown \monresquant / (1 - \shadefac)$.
For $\val \leq \monres(1 - \shadefac)$, we know that the revenue of posting the
price~$\val$ is smaller than that of posting $\val / (1 - \shadefac)$.
In other words, $\val \quant(\val) \leq \frac{\val}{1 - \shadefac}
\quantshade(\quant(\val))$, and therefore we get the lower bound
$\quantshade(\quant) \geq (1 - \shadefac) \quant$, $\forall \quant \in
[\quantshadedown, 1]$.

Let the revenue of posting $\frac \monres {1 - \shadefac}$ be $\revshade
\rev(\monresquant)$.  Then the quantile of the value $\frac \monres {1 -
\shadefac}$ is $\revshade \monresquant (1 - \shadefac)$.  We know that,
for all $\quant \in [\monresquant, \quantshadedown]$, $\quantshade \geq
\revshade \monresquant (1 - \shadefac)$.  Now we have that
\begin{align*}
& - \int_0^1 \rev'(\quant) \quantshade(\quant) \: \dd \quant \geq
-\int_0^{\monresquant} \rev'(\quant) \quantshade(\quant) \: \dd \quant -
\int_{\monresquant}^{\quantshadedown} \rev'(\quant) \revshade \monresquant(1 -
\shadefac) \: \dd \quant - (1 - \shadefac) \int_{\quantshadedown}^1
\rev'(\quant) \quant \: \dd \quant \\
= & -\int_0^{\monresquant} \rev'(\quant) \quantshade(\quant) \: \dd \quant -
\revshade \monresquant(1 - \shadefac) \rev(\quant) \Big|_{\monresquant}^{\quantshadedown} - (1 - \shadefac) \left[ \rev(\quant) \quant
\Big|_{\quantshadedown}^1 - \int_{\quantshadedown}^1 \rev(\quant) \: \dd \quant
\right] \\
\end{align*}
Using $\rev(\quantshadedown) = \revshadedown \rev(\monresquant)$, this is equal
to
\begin{align*}
& -\int_0^{\monresquant} \rev'(\quant) \quantshade(\quant) \: \dd \quant - 
(1 - \shadefac) \left[\revshade \monresquant (\revshadedown - 1)
\rev(\monresquant) - \revshadedown \rev(\monresquant) \cdot \frac {\revshadedown
\monresquant}{1 - \shadefac} - \int_{\quantshadedown}^1 \rev(\quant) \: \dd
\quant \right] \\
= & -\int_0^{\monresquant} \rev'(\quant) \quantshade(\quant) \: \dd \quant  
+ (1 - \shadefac) \revshade \monresquant (1 - \revshadedown)
\rev(\monresquant) + \revshadedown^2 \monresquant \rev(\monresquant) + (1 -
\shadefac) \int_{\quantshadedown}^1 \rev(\quant) \: \dd \quant \\
\geq & -\int_0^{\monresquant} \rev'(\quant) \quantshade(\quant) \: \dd \quant  
+ (1 - \shadefac) \revshade \monresquant (1 - \revshadedown)
\rev(\monresquant) + \revshadedown^2 \monresquant \rev(\monresquant) + (1 -
\shadefac) \revshadedown \rev(\monresquant) \cdot \frac 1 2 \left(1 -
\frac{\revshadedown \monresquant}{1 - \shadefac}
\right) \\
= & -\int_0^{\monresquant} \rev'(\quant) \quantshade(\quant) \: \dd \quant  
+ (1 - \shadefac) \revshade \monresquant (1 - \revshadedown)
\rev(\monresquant) + \revshadedown^2 \monresquant \rev(\monresquant) +
\frac {\revshadedown \rev(\monresquant)} 2 (1 - \shadefac - \revshadedown
\monresquant),
\end{align*} 
where in the inequality we used the concavity of $\rev$ and the fact
$\rev(\quantshadedown) = \revshadedown \rev(\monresquant)$.  Note that the first
term is a quantity unaffected by the value of~$\revshadedown$.  Therefore we can
take the partial derivative of this lower bound with respect to $\revshadedown$
and get
\begin{align*}
\rev(\monresquant) \left[ - (1 - \shadefac) \revshade \monresquant +
\monresquant \revshadedown + \frac{1 - \shadefac} 2  \right].
\end{align*}

Since $\quantshadedown \geq \monresquant$, $\monres(1 - \shadefac)
\quantshadedown \geq (1 - \shadefac) \monres \monresquant$, and hence
$\revshadedown \geq 1 - \shadefac$.\footnote{A tighter lower bound for
$\revshadedown$ would be $\frac{1 - \shadefac}{1 - \shadefac + \shadefac
\monresquant}$, but the loose bound $1 - \shadefac$ will suffice for our
purpose.}  Therefore the partial derivative is lower bounded by
\begin{align*}
\rev(\monresquant) (1 - \shadefac) \left(\monresquant (1 - \revshade) + \frac
1 2 \right) \geq 0.
\end{align*}

Therefore, to minimize the revenue, the adversary should minimize
$\revshadedown$.  Hence, substituting $\revshadedown$ by $1 - \shadefac$, we get
the following lower bound on the revenue:
\begin{align}
-\int_0^{\monresquant} \rev'(\quant) \quantshade(\quant) \: \dd \quant +
\left[\frac {(1 - \shadefac)(\monresquant + 1)} 2 + \monresquant \revshade
\shadefac \right] \rev(\monresquant).
\label{eq:shade-bound-right}
\end{align}

Now we consider bounding the first term.  For $\quant \leq \monresquant$, its
virtual value $\rev'(\quant)$ is positive, therefore we need to upper bound
$\quantshade(\quant)$.

Recall that $\revshade$ is the number such that the revenue of posting $\frac
{\monres}{1 - \shadefac}$ is $\revshade \rev(\monresquant)$.  Let $\quant_1$
denote the quantile of the value $\monres / (1 - \shadefac)$, and we knew that
$\quant_1 = \revshade \monresquant (1 - \shadefac)$.
Notice that, for every $\quant \in [0, \monresquant]$, $\quantshade(\quant) \geq \quant_1$.  
We have 
\begin{align*} 
\int_0^{\monresquant} \rev'(\quant) \quantshade(\quant) \: \dd \quant \leq
\quant_1 \int_0^{\monresquant} \rev'(\quant) \: \dd \quant = \revshade
\monresquant (1 - \shadefac) \rev(\monresquant).
\end{align*} 

Substituting this to \eqref{eq:shade-bound-right}, we see that the revenue is
lower bounded by
\begin{align*}
\left[(1 - \shadefac)\left( \frac {\monresquant + 1} 2 - \revshade \monresquant \right) + \monresquant \revshade \shadefac \right] \rev(\monresquant).
\end{align*}
\end{proof}

\end{document}